\begin{document}

\newcommand{\odd}{\mathrm{odd}}

\newcommand{\vct}[1]{\mathbf{#1}}

\newcolumntype{C}{>{\centering\arraybackslash}X}
\newcolumntype{L}{>{\raggedright\arraybackslash}X}
\newcolumntype{R}{>{\raggedleft\arraybackslash}X}

\newcommand{\vx}{\vct x}
\newcommand{\Bset}{\mathcal B}
\newcommand{\NB}{N_\Bset}
\newcommand{\RX}{(R, \lambda)}

\newcommand{\eq}{Eq.}
\newcommand{\eqs}{Eqs.}
\newcommand{\req}[1]{(\ref{eq:#1})}
\newcommand{\refeq}[1]{\eq\,\req{#1}}
\newcommand{\refeqs}[1]{\eqs\,\req{#1}}
\newcommand{\reqsub}[2]{(\ref{eq:#1}#2)}
\newcommand{\refeqsub}[2]{\eq\,\reqsub{#1}{#2}}
\newcommand{\refeqssub}[2]{\eqs\,\reqsub{#1}{#2}}

\newcommand{\refthm}[1]{Theorem \ref{thm:#1}}
\newcommand{\refthms}[1]{Theorems \ref{thm:#1}}
\newcommand{\refsec}[1]{Section \ref{sec:#1}}
\newcommand{\refsecs}[1]{Sections \ref{sec:#1}}
\newcommand{\refapd}[1]{Appendix \ref{apd:#1}}
\newcommand{\reftab}[1]{Table \ref{tab:#1}}
\newcommand{\reftabs}[1]{Tables \ref{tab:#1}}
\newcommand{\reffig}[1]{Fig. \ref{fig:#1}}
\newcommand{\reffigs}[1]{Figs. \ref{fig:#1}}

\newtheorem{theorem}{Theorem}
\newenvironment{remark}[1][1]%
{\par\noindent\textbf{Remark #1.} }{\medskip}

\title{Cycles of the logistic map}
\author{Cheng Zhang}
\affiliation{Applied Physics Program and Department of Bioengineering, \
Rice Univeristy, Houston, TX 77005}

\begin{abstract}
%
The onset and bifurcation points of the $n$-cycles of
  a polynomial map are located
  through a characteristic equation
  connecting cyclic polynomials formed by periodic orbit points.
The minimal polynomials of the critical parameters
  of the logistic, H\'enon, and cubic maps are obtained 
  for $n$ up to 13, 9, and 8,
  respectively.
%
\end{abstract}
\maketitle

\section{Introduction}

Consider the logistic map \cite{may, strogatz}:
\begin{equation}
  x_{k+1} = f(x_k) \equiv r \, x_k \, ( 1 - x_k ).
\label{eq:logmap}
\end{equation}
If we iterate \refeq{logmap} from $k = 0$,
what does the resulting sequence $x_0$,
  $x_1 = f(x_0)$,
  $x_2 = f(x_1)$, $\ldots$ look like?
We can visualize the sequence on the cobweb plot,
  see \reffig{cobweb} for examples.
Starting from $(x_0, x_0)$ on the diagonal,
  each vertical arrow takes $(x_k, x_k)$ to $(x_k, y)$,
  where $y = f(x_k) = x_{k+1}$;
the next horizontal arrow then reflects $(x_k, y)$ to
  $(y, y) = (x_{k+1}, x_{k+1})$,
  which starts the next iteration.

\begin{figure}[h]
  \begin{minipage}{\linewidth}
    \begin{center}
        \includegraphics[angle=-90, width=\linewidth]{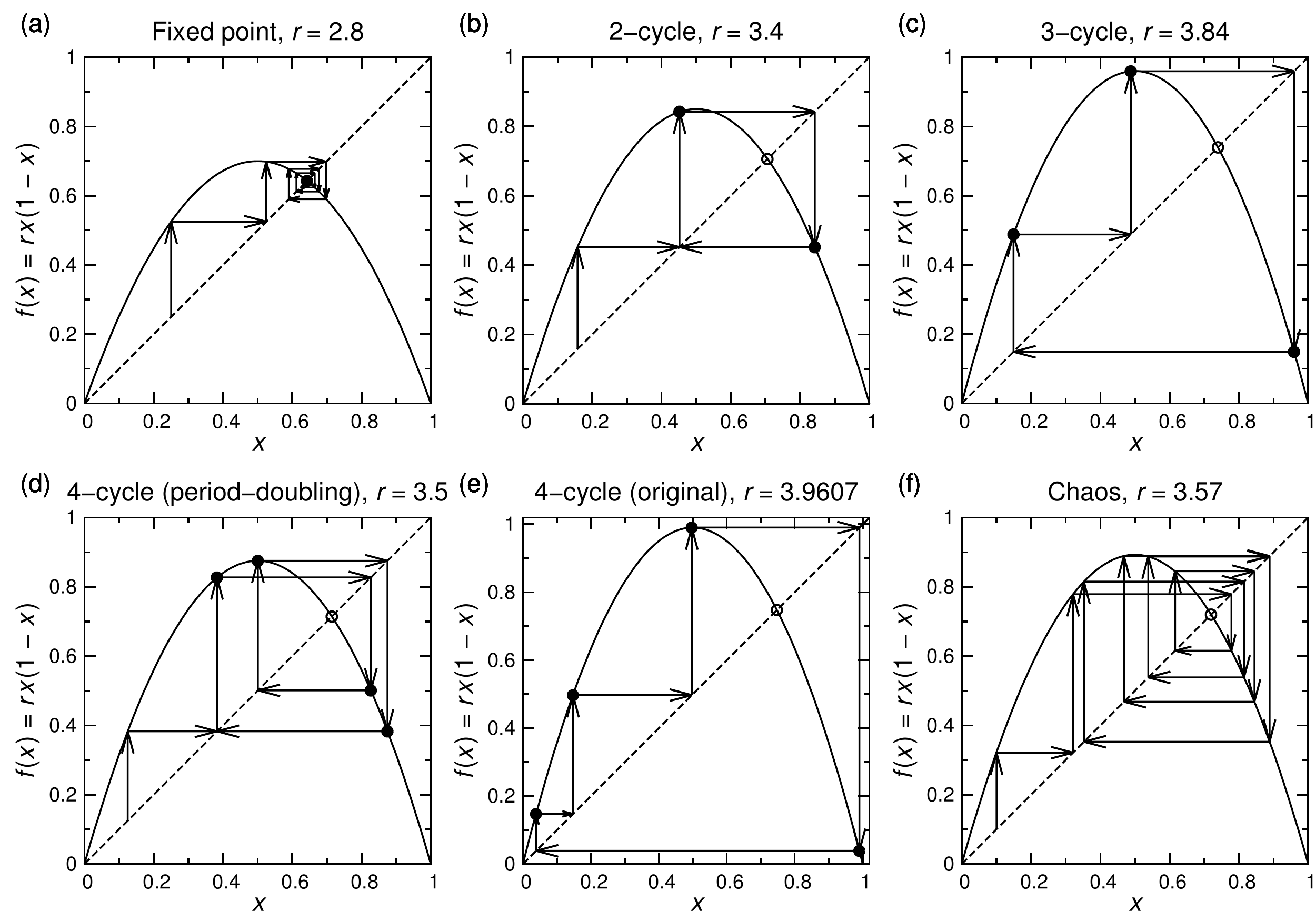}
    \end{center}
  \end{minipage}%
  \caption{\label{fig:cobweb}
  Cobweb plot of the logistic map.}
\end{figure}

Three outcomes are possible: 
  (i) a \emph{fixed point}, 
      which is a constant (including infinity),
      e.g., \reffig{cobweb}(a);
  (ii) a periodic \emph{cycle},
      which is a self-repeating pattern,
      e.g., \reffigs{cobweb}(b)-(e);
or
  (iii) a chaotic trajectory,
      e.g., \reffig{cobweb}(f).
We will focus on the first two cases here.

A fixed point is a solution of $x^* = f(x^*)$.
If $x_0$ deviates slightly from $x^*$
  and the sequence still converges to $x^*$,
  we call it \emph{stable}.
For a differentiable $f$,
  a stable fixed point requires $|f'(x^*)| \le 1$
  to reduce deviations in successive iterations \cite{strogatz}.

In an $n$-cycle, $n$ is the smallest positive integer 
  that allows $x_1 = x_{n+1} = f^n(x_1)$,
  where $f^n$ is the $n$th iterate of $f$,
  e.g., $f^3(x) = f(f(f(x)))$.
Thus, any $x_k$ in an $n$-cycle of $f$ must be a fixed point of $f^n$
\big[the reverse is, however, untrue, for a fixed point of $f^n$ 
  can also be a fixed point of $f^d$ as long as $d|n$: 
  if $f^d(x) = x$, then $f^n(x) = f^d(\cdots f^d(x)\cdots) = x$\big].
We can therefore classify a cycle as stable or unstable
  by the corresponding fixed point of $f^n$:
  a stable cycle requires
  $\big| \frac {d} {dx} f^n(x_1) \big| \le 1$,
  or by the chain rule,
\begin{equation}
  \Big| f'(x_1) \dots f'(x_n) \Big| \le 1,
\label{eq:der} 
\end{equation}
where $x_1. \ldots, x_n$ are the $n$ points within the cycle, or the orbit.
Further, the onset and bifurcation points
  are defined at the loci 
  where $\frac {d} {dx} f^n(x_1)$ reaches $+1$ and $-1$, 
  respectively \cite{strogatz}.

The outcome of the iterated sequence of course 
  depends on the parameter $r$.
Below we will present an algorithm to identify
  all regions of $r$ that allow stable $n$-cycles.
%

\section{\label{sec:logmap}Logistic map}

We will illustrate the algorithm on the logistic map \cite{may, strogatz},
  defined in \refeq{logmap}.
If $r$ is real,
  we will find windows $(r_a, r_b)$,
    within which stable $n$-cycles can exist.
%
Here, if $r > 0$, then
  $r_a$ ($r_b$) are the onset (bifurcation) points.
There are generally multiple such windows even for a single $n$,
but all onset points satisfy the same polynomial equation,
  and all bifurcation points another.
Our goal is thus to find the two polynomials for a given $n$.

To simplify the calculation, we first change variables \cite{saha} by
%
\[
    x^{(\mathrm{new})} \leftarrow r(x^{(\mathrm{old})} - 1/2),
    \quad R \leftarrow r(r-2)/4,
\]
%
and rewrite the map, in terms of $x^{(\mathrm{new})}$, as
\begin{equation}
  x_{k+1} = f(x_k) \equiv R - {x_k}^2.
\label{eq:logmaps}
\end{equation}
%
%
We will solve the cycle boundaries as zeros of the polynomials of $R$,
and the corresponding polynomials for $r$ can be obtained by $R \rightarrow r(r-2)/4$.
%
%

\subsection{Overall plan}

We solve the problem in two steps.
Since the $n$-cycles form a subset of the fixed points of $f^n$,
we will first find
  the polynomials at the stability boundaries of
  the fixed points of $f^n$
  (\refsecs{cyclic} to \ref{sec:examples}),
then remove contributions from shorter $d$-cycles ($d|n$)
  (\refsecs{primfac} and \ref{sec:degprimfac}).

Let us consider the first step of finding the fixed points of $f^n$.
At the first glance, the problem can be tackled by brute force:
  we can solve
  \refeqs{logmaps} and express $x_1, \ldots, x_n$
  in terms of $R$,
  and then plug the solution into \req{der}.
The result contains $R$ only (no $x_k$),
  and is therefore the answer.
But since \refeqs{logmaps} are nonlinear,
  it quickly becomes impossible for $n > 2$,
  as the degree of polynomials grows exponentially;
  thus it is nontrivial to reduce the final equation of $R$
  into a polynomial one.
Nonetheless, on a computer, one can 
  construct a Gr\"obner basis \cite{kk1}
  to automate the reduction.
The approach, albeit straightforward, does not 
  exploit the cyclic structure of \refeqs{logmaps},
can thus be improved by the following alternative.

Instead of solving \refeqs{logmaps} for $x_k$,
  we will derive a set of \emph{homogeneous linear} equations
  of cyclic polynomials of $x_k$
  (an example of a cyclic polynomial is $x_1 x_2 + x_2 x_3 + \dots + x_n x_1$).
%
  Now the matrix formed by the coefficients of the homogeneous linear equations 
  must have a zero determinant,
  for the cyclic polynomials are not zeros altogether.
%
Thus, the zero-determinant condition gives 
  the needed polynomial equations of $R$,
whose roots contain all fixed points of $f^n$.
This completes the first step.
%
%

For the second step, 
we show 
that short cycles serve as
factors in the polynomials obtained above,
and thus can be readily factored out.

\subsection{\label{sec:cyclic}Cyclic polynomials}

A polynomial is cyclic if it is invariant
  under the cycling of variables
  $x_1 \rightarrow x_2, x_2 \rightarrow x_3, 
  \ldots, x_n \rightarrow x_1$,
  e.g., $a(\vx) = x_1 x_2 + x_2 x_3 + x_3 x_4 + x_4 x_1$ 
  and $b(\vx) = x_1 x_3 + x_2 x_4$,
  for $n = 4$, where $\vx \equiv \{x_1, \ldots, x_n\}$.
A cyclic polynomial should not to be confused with a symmetric polynomial,
  which is invariant under the exchange of any two $x_k$ and $x_{j}$ ($j \ne k$);
  e.g., $a(\vx)$ and $b(\vx)$ are not symmetric,
  but $a(\vx) + b(\vx)$ is.

A cyclic polynomial can be generated by summing over
  distinct cyclic versions of a simpler polynomial of $x_k$, or a \emph{generator},
e.g., 
$x_1 x_2$ is a generator of $a(\vx)$;
$x_1 x_3$ is that of $b(\vx)$;
and
$x_1 x_2 + \frac{1}{2} x_1 x_3$ is that of $a(\vx) + b(\vx)$;
note that the coefficient before $x_1 x_3$ 
  is 1 in the second case
  for there are only two distinct versions,
  but is $\frac{1}{2}$ in the third case
  for there are four. 

We now consider cyclic polynomials
  generated from a monomial of unit coefficient,
  such as $a(\vx)$ and $b(\vx)$, but not $a(\vx) + b(\vx)$.
They can be systematically labeled as follows.
We pick the monomial generator,
  which can be written as 
  ${x_1}^{e_1} {x_2}^{e_2} \dots {x_n}^{e_n}$,
  then form a sequence $p$ of indices with
  $e_1$ 1's, $e_2$ 2's, \ldots, $e_n$ $n$'s;
  the corresponding cyclic polynomial is denoted by $C_p(\vx)$,
e.g.,
  $C_{12}(\vx)  = x_1 x_2 + x_2 x_3 + \dots + x_n x_1$ and
  $C_{112}(\vx) = {x_1}^2 x_2 + {x_2}^2 x_3 + \dots + {x_n}^2 x_1$.
We omit the length $n$ in this notation,
  for we will mostly work with a fixed $n$ at a time.
Since a cyclic polynomial can have multiple generators,
  e.g., both $x_1 x_2$ and $x_2 x_3$ are generators of $C_{12}(\vx)$
   (assuming $n \ge 3$),
  we pick the one that corresponds to
  the smallest $p$
  in the sense of lexicographic order,
  e.g., we choose $x_1 x_2$ instead of $x_2 x_3$ for $C_{12}(\vx)$, 
  because $12 < 23$.
Finally, we add $C_0(\vx) \equiv 1$ for completeness.

\subsection{Square-free cyclic polynomials}

We further restrict ourselves to 
  a subset of square-free cyclic polynomials,
  which have no square or higher powers of any $x_k$,
  e.g., $C_{12}(\vx) = x_1 x_2 + \dots$ is square-free,
  $C_{112}(\vx) = {x_1}^2 x_2 + \dots$ is not,
  see \reftab{sqrfreepoly} for more examples.
Obviously, the label $p$ of a square-free polynomial 
  has no repeated index.
We denote the set of all square-free $p$ by 
$\Bset = \{0, 1, 12, 13, \ldots, 123, 124, \ldots, 12\dots n\}$
such that its size $|\Bset|$ equals the number $\NB$ of
  square-free cyclic polynomials.

\begin{table}[h]\footnotesize
\caption{Square-free cyclic polynomials $C_p(\vx)$ for the logistic map ($n \ge 5$).}
\begin{center}
\begin{tabular}{c c c c c}
\hline
$p$ & $C_p(\vx)$ & Generator$^\dagger$ & Necklace$^\ddagger$ & $p$ as an index set$^*$\\ 
\hline 
0             & 1 & 1 & $0\ldots0$ & $\emptyset$ \\
1             & $x_1 + x_2 + \dots + x_n$ 
              & $x_1$ (or $x_2$, \ldots)
              & $10\ldots0$
              & $\{1\}$ (or $\{2\}$, \ldots) \\
12            & $x_1 x_2 + x_2 x_3 + \dots + x_n x_1$ 
              & $x_1 x_2$ (or $x_2 x_3$, \ldots)
              & $110\ldots0$
              & $\{1, 2\}$ (or $\{2, 3\}$, \ldots) \\
13            & $x_1 x_3 + x_2 x_4 + \dots + x_n x_2$
              & $x_1 x_3$ (or $x_2 x_4$, \ldots)
              & $1010\ldots0$
              & $\{1, 3\}$ (or $\{2, 4\}$, \ldots) \\
$\vdots$      & $\vdots$ \\
123           & \hspace{1mm}$x_1 x_2 x_3 + x_2 x_3 x_4 + \dots + x_n x_1 x_2$\hspace{1mm}
              & \hspace{1mm}$x_1 x_2 x_3$ (or $x_2 x_3 x_4$, \ldots)\hspace{1mm}
              & $1110\ldots0$
              & $\{1, 2, 3\}$ (or $\{2, 3, 4\}$, \ldots) \\
124           & $x_1 x_2 x_4 + x_2 x_3 x_5 + \dots + x_n x_1 x_3$
              & $x_1 x_2 x_4$ (or $x_2 x_3 x_5$, \ldots)
              & $11010\ldots0$
              & $\{1, 2, 4\}$ (or $\{2, 3, 4\}$, \ldots) \\
$\vdots$      & $\vdots$ \\
$12\dots n$  & $x_1 x_2 \dots x_n$ 
              & $x_1 x_2 \dots x_n$
              & $11\ldots1$
              & $\{1, 2, \ldots, n\}$  \\
\hline
\multicolumn{5}{p{\linewidth}}{
$^\dagger$ Alternative generators are shown in parentheses.
}\\
\multicolumn{5}{p{\linewidth}}{
$^\ddagger$ The corresponding binary necklaces.
}\\
\multicolumn{5}{p{\linewidth}}{
$^*$ The label $p$ of a square-free cyclic polynomial 
    has no repeated indices;
    so the indices can be cast to a set. 
}\\
\hline
\end{tabular}
\end{center}
\label{tab:sqrfreepoly}
\end{table}

We first show that 
  the square-free cyclic polynomials 
  serve as a basis for expanding cyclic polynomials:

\begin{theorem}
  For the logistic map \refeq{logmaps}, 
  any cyclic polynomial $K(\vx)$
  formed by the $n$-cycle points
  $\vct x = \{x_1, \ldots, x_n\}$
  is a linear combination of 
the square-free cyclic polynomials $C_p(\vx)$:
\[
  K(\vx) = \sum_{p \in \Bset} f_p(R) \, C_p(\vx),
\]
  where $\Bset = \{0, 1, 12, 13, \ldots, 12\dots n\}$ is
  the set of indices of all square-free cyclic polynomials,
  and $f_p(R)$ are polynomials of $R$.
  \label{thm:sqrfree}
\end{theorem}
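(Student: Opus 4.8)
The plan is to exploit the single algebraic relation that the orbit points satisfy, namely $x_k^2 = R - x_{k+1}$ with indices read cyclically modulo $n$ (so $x_{n+1} = x_1$), which is just a rearrangement of \refeq{logmaps}. This identity lets me trade any squared variable for a linear expression in the next variable, and iterating it should collapse an arbitrary monomial down to a square-free one, with the accumulated factors of $R$ becoming the coefficient polynomials $f_p(R)$.

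First I would set up the reduction. If a monomial in $x_1,\dots,x_n$ is not square-free, then some variable occurs with exponent at least $2$, so a factor $x_k^2$ is present, and I replace it by $R - x_{k+1}$. I would argue termination by a degree count: substituting into $x_k^2 \cdot (\text{rest})$, where the remaining factor has degree $d-2$, produces $R\cdot(\text{rest})$ of degree $d-2$ and $x_{k+1}\cdot(\text{rest})$ of degree $d-1$, both strictly below the original degree $d$. Reducing the highest-degree non-square-free monomials first, the maximal degree present never rises and the number of top-degree offenders strictly drops, so the process halts at a combination of square-free monomials whose coefficients are polynomials in $R$. Because the relation holds \emph{exactly} on cycle points, the resulting identity $K(\vx) = \sum_s h_s(R)\, s(\vx)$, with $s$ ranging over square-free monomials, is valid for every $\vx$ that is an $n$-cycle orbit.

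The remaining task is to repackage this square-free expansion into the cyclic polynomials $C_p(\vx)$, and here lies the only real subtlety: the reduction treats the variables asymmetrically, so $\sum_s h_s\, s$ need not look cyclic even though $K$ is. I would fix this by symmetrizing over the \emph{whole} cyclic group rather than over individual orbits. Writing $\sigma$ for the shift $x_k \mapsto x_{k+1}$, the relation family $x_k^2 = R - x_{k+1}$ is $\sigma$-invariant and $\sigma$ permutes square-free monomials among themselves; since $K$ is cyclic, $K = \frac1n\sum_{j=0}^{n-1}\sigma^j K$, and applying $\sigma^j$ to the expansion and summing gives $\sum_s h_s(R)\,\bigl(\frac1n\sum_{j=0}^{n-1}\sigma^j s\bigr)$. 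Each inner sum runs over the cyclic orbit of $s$ a fixed number of times, hence is a rational multiple of the orbit sum, i.e. of the corresponding $C_p(\vx)$; collecting terms with the same label yields $K(\vx) = \sum_{p \in \Bset} f_p(R)\, C_p(\vx)$ with each $f_p$ a polynomial in $R$, as claimed. Averaging over all $n$ shifts (not merely the $|O|$ that fix a chosen representative) is what makes this robust, since it sidesteps the possibility that a square-free monomial has a smaller stabilizer than the monomial it came from.

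I expect the main obstacle to be this cyclicity bookkeeping rather than the reduction itself, in particular checking that symmetrization really returns genuine orbit sums $C_p$ and that the orbit-size factors stay rational so the $f_p$ remain polynomial in $R$. A cleaner but heavier alternative would be to note that the generators $x_k^2 - R + x_{k+1}$ form a Gr\"obner basis: under any degree order their leading terms $x_k^2$ are pairwise coprime, so all $S$-polynomials reduce to zero by Buchberger's first criterion. The square-free monomials are then exactly the standard monomials and form a genuine basis with a unique normal form, after which $\sigma$-invariance forces the coefficients to be constant on cyclic orbits and the grouping into $C_p$ is automatic. I would present the elementary symmetrization argument as the main line and relegate the Gr\"obner-basis viewpoint to a remark.
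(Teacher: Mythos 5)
Your proposal is correct, and its core is the same as the paper's proof: the substitution $x_k^2 \rightarrow R - x_{k+1}$ (indices read mod $n$), termination by a degree count, and collection of the surviving square-free monomials into the basis $\{C_p\}$. Where you differ is in the cyclicity bookkeeping: the paper simply asserts that ``since the original polynomial is cyclic, so is the reduced one,'' which is true but needs justification, because the rewriting is applied one monomial at a time and a priori could yield an expression that is not manifestly cyclic. Your symmetrization over the shift $\sigma$ --- using that the relation family is $\sigma$-invariant and that a shifted cycle orbit is again a cycle orbit --- closes this gap cleanly, at the harmless cost of rational (rather than integer) coefficients in the $f_p(R)$, which the theorem as stated permits; your Gr\"obner/confluence remark (pairwise coprime leading terms $x_k^2$, hence a unique normal form, which is then forced to be $\sigma$-invariant) is an equally valid alternative that recovers the paper's one-line assertion verbatim. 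Either route proves exactly the stated theorem, and your treatment of the cyclic-invariance step is, if anything, more rigorous than the paper's own.
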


\begin{proof}
We show the theorem by the following square-free reduction.
Given a cyclic polynomial $K(\vx)$,
  we recursively apply \refeq{logmaps} as
  $x_k^2 \rightarrow R - x_{k+1}$,
  until all squares or higher powers of $x_k$ are eliminated.
The process will not last indefinitely
  for each substitution reduces the degree in $x_k$ (no matter which $k$)
  by one.
Since the original polynomial is cyclic,
  so is the reduced one.
All terms that involve no $x_k$
  are collected to serve as the coefficient before $C_0(\vx)$, which is 1.
Since no square or higher powers of $x_k$
  can survive the reduction,
  all cyclic polynomials in the final result 
  are square-free.
The coefficients are polynomials of $R$,
  for $R$ is the only variable introduced by the substitutions.
\end{proof}

For example, for $n = 2$, the cyclic polynomial
  $K(\vx) = {x_1}^2 \, x_2 + {x_2}^2 \, x_1$
can be written as $K(\vx) = (R + 1) \, C_{1}(\vx) - 2 R \, C_0(\vx)$
for
${x_1}^2 \, x_2 = R \, x_2 - {x_2}^2 = R \, x_2 - R + x_1$ and
${x_2}^2 \, x_1 = R \, x_1 - R + x_2$.

\refthm{sqrfree} 
  shows that any cyclic polynomial
  can be expanded
  as a combination of the square-free ones,
  which serve as a basis.
Below we show that at the onset and bifurcation points,
  the square-free cyclic polynomials $C_p(\vx)$ are themselves
  linearly connected by an $\NB \times \NB$ matrix equation.
The determinant of matrix must vanish,
  and this condition yields the solution of the problem.

\subsection{\label{sec:algo}Algorithm for locating fixed points of $f^n$}

We first observe that the derivative of $f^n$
is a cyclic polynomial:
\begin{equation}
  \Lambda(\vx)
   = \frac{d}{dx}f^{n}(x_1)
   = f'(x_n) \dots f'(x_1)
   = (-2)^n x_1 \dots x_n.
\label{eq:logder}
\end{equation}

Now, for any $p$, $\Lambda(\vx) \, C_p(\vx)$
  is also a cyclic polynomial,
since the product of two cyclic polynomials is cyclic too.
We can therefore expand it by \refthm{sqrfree} as
\begin{equation}
  \Lambda(\vx) \, C_p(\vx) = \sum_{q \in \Bset} T_{pq}(R) \, C_q(\vx),
\label{eq:xcp0}
\end{equation}
where $T_{pq}(R)$ is a polynomial of $R$,
and $p, q \in \Bset$.

By \refeq{der}, at the onset or bifurcation point,
  $\Lambda(\vx)$ is equal to a number
  $\lambda = +1$ or $-1$, respectively;
  so \refeq{xcp0}
  becomes a homogeneous linear equation of
  $C_p(\vx)$:
\begin{equation}
  \lambda \, C_p(\vx) = \sum_{q \in \Bset} T_{pq}(R) \, C_q(\vx),
\label{eq:xcp}
\end{equation}
%
%
or in matrix form,
\begin{equation}
  \big[ \lambda \, \vct I - \vct T(R) \big] \, \vct C = 0,
\tag{$\ref{eq:xcp}'$}
\end{equation}
where
$\vct I$ is the $\NB \times \NB$ identity matrix,
$\vct T(R) = \{T_{pq}(R)\}$ is an $\NB \times \NB$ matrix,
and
$\vct C = \{C_p(\vx)\}$ is an $\NB$-dimensional column vector.

Since a set of homogeneous linear equations
  has a non-trivial solution 
  only if the determinant of the coefficient matrix
  is zero, we have
\begin{equation} 
  A_n\RX \equiv \Big| \lambda \, \vct I - \vct T(R) \Big| = 0.
\label{eq:xr}
\end{equation}
Here we have defined $A_n\RX$ as a polynomial of $R$ and $\lambda$,
  and we have also attached the subscript $n$, 
  for later use with $A_d\RX$, where $d$ are divisors of $n$.
\refeq{xr} is a necessary condition 
  since $C_p(\vx)$ cannot vanish altogether;
and since it involves $R$ only, 
  the polynomial expansion of the determinant
  gives the answer to our problem.

To summarize, we have
\begin{theorem}
  At the onset and bifurcation points,
  the square-free cyclic polynomials $C_p(\vx)$
  are linear related by \refeq{xcp}, with
  $\lambda$ being $+1$ and $-1$, respectively,
  and $T_{pq}(R)$ the coefficients
  from the square-free reduction of $\Lambda(\vx) \, C_p(\vx)$
  with $\Lambda(\vx)$ specified by \refeq{logder}.
  Thus, $R$ at the two points are the roots of the 
  polynomials $A_n(R, \lambda = \pm1)$
  obtained from the characteristic equation \refeq{xr}.
  \label{thm:main}
\end{theorem}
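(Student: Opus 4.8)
The plan is to assemble the theorem from three structural facts already in hand: the basis property of the square-free cyclic polynomials (\refthm{sqrfree}), the explicit form of the multiplier $\Lambda(\vx)$ in \refeq{logder}, and the characterization of the onset and bifurcation points as the loci where $\Lambda(\vx)=\lambda=\pm1$ (\refeq{der}). The argument is largely a matter of tracking where each substitution is legitimate, so I would state at each step that the relevant identities hold on an actual orbit.

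First I would fix an $n$-cycle $\vx=\{x_1,\dots,x_n\}$ of the map \refeq{logmaps} sitting exactly at an onset or bifurcation point, so that every $x_k$ obeys the orbit relation $x_k^2 = R - x_{k+1}$ and $\Lambda(\vx)=\lambda$ with $\lambda=+1$ or $-1$. Then I would note that for each $p\in\Bset$ the product $\Lambda(\vx)\,C_p(\vx)$ is cyclic, since a product of cyclic polynomials is cyclic; hence \refthm{sqrfree} applies and yields the expansion \refeq{xcp0}, namely $\Lambda(\vx)\,C_p(\vx)=\sum_{q\in\Bset}T_{pq}(R)\,C_q(\vx)$. The point I would emphasize is that the square-free reduction producing the $T_{pq}(R)$ uses only the orbit relation $x_k^2\to R-x_{k+1}$, so \refeq{xcp0} is a polynomial identity valid on the chosen orbit, not merely a formal rewriting.

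Next I would substitute the scalar value $\Lambda(\vx)=\lambda$ into the left side of \refeq{xcp0}, turning it into the homogeneous linear system \refeq{xcp}, $\lambda\,C_p(\vx)=\sum_{q}T_{pq}(R)\,C_q(\vx)$, or in matrix form $[\lambda\vct I-\vct T(R)]\vct C=0$. The vector $\vct C=\{C_p(\vx)\}$ is nonzero because its entry $C_0(\vx)=1$ never vanishes. A square homogeneous system with a nonzero solution forces a singular coefficient matrix, so $A_n\RX=|\lambda\vct I-\vct T(R)|=0$, i.e.\ \refeq{xr} holds; since $A_n\RX$ involves $R$ and $\lambda$ only, the value of $R$ at the chosen point is a root of $A_n(R,\lambda=\pm1)$, which is the claim.

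The subtlety I would treat most carefully is the logical direction together with the precise status of the conclusion. The determinant condition is necessary but not sufficient: the vanishing of $A_n\RX$ guarantees only that the true critical $R$ lie among its roots, and the system may also be solved by vectors that do not come from an actual orbit (in particular, roots contributed by shorter $d$-cycles with $d\mid n$, whose removal is deferred to the second step of the overall plan). I would therefore phrase the result as an inclusion of the critical $R$-values in the root set of $A_n$, and be explicit that it is the on-orbit validity of the reduction identities behind $\vct T(R)$ that makes the nonzero orbit vector $\vct C$ an honest solution of \refeq{xcp}, so that singularity is genuinely forced rather than merely formally asserted.
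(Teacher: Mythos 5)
Your proposal is correct and follows essentially the same route as the paper: expand the cyclic product $\Lambda(\vx)\,C_p(\vx)$ via \refthm{sqrfree}, set $\Lambda(\vx)=\lambda=\pm1$ at the critical points to obtain the homogeneous system \refeq{xcp}, and invoke the nonvanishing of $\vct C$ (the paper says ``$C_p(\vx)$ cannot vanish altogether,'' you sharpen this to $C_0(\vx)=1$) to force the zero determinant \refeq{xr}. Your added emphasis on the necessity-only direction matches the paper's own caveat and its deferral of removing the spurious short-cycle roots to \refthm{primfac}.
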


\begin{remark}[1]
  For complex $R$ and $\vx$,
  $\lambda$ should be generalized from $\pm1$
  to any $\lambda = \exp(i\phi)$,
  where $\phi \in [0, 2\pi]$,
  the algorithm still applies.
  By increasing $\phi$ from 0 to $2\pi$,
  we can trace a two-dimensional region of a complex $R$ 
  for stable cycles.
  These regions are bulbs in the Mandelbrot set,
  see ref. \cite{stephenson} and \reffig{logbulb}.
  
  Further, with $\lambda = 0$,
  the algorithm determines the superstable point,
  at which the deviation from a cycle point vanishes
  to the linear order after $n$ iterations of $f$.
  But we have a better algorithm in this case:
  since at least one of the $x_k$ is zero by \refeq{logder},
  then $f^n(x_k = 0) = x_{n+k} = 0$ provides 
  the needed polynomial equation of $R$ \cite{strogatz}.
\end{remark}

\subsection{\label{sec:examples}Examples}

We illustrate the above algorithm by cases of small $n$.
It is still helpful to have a mathematical software verify some steps
  (e.g., in computing the determinants and their factorization).

For $n = 1$, we have two square-free cyclic polynomials
  $C_0(\vx) = 1$ and $C_1(\vx) = x_1$;
  and $\Lambda(\vx) = -2 \, x_1$
\big[we shall drop ``$(\vx)$'' below for convenience\big].
Thus $\Lambda \, C_0 = -2 \, x_1 = -2 \, C_1$,
$\Lambda \, C_1 = -2 \, x_1^2 = -2 \, R + 2 x_1 = -2 \, R \, C_0 + 2 \, C_1$,
or
\[
  \lambda
  \left( \begin{array}{c}
  C_0 \\
  C_1
  \end{array} \right)
  =
  \left( \begin{array}{cc}
  0     & -2 \\
  -2R   &  2
  \end{array}\right)
  \left( \begin{array}{c}
  C_0 \\
  C_1
  \end{array} \right),
\]
and \refeq{xr} reads
\begin{equation}
0=  \left| \begin{array}{cc}
  \lambda     & 2          \\
  2 R         & \lambda - 2
  \end{array} \right| = -4R - 2\lambda + \lambda^2,
  \tag{\ref{eq:xr}-1}
\label{eq:xr1}
\end{equation}
which is just the equation for a fixed point.
The fixed point begins at $\lambda = +1$ or $R_a = -1/4$,
  and becomes impossible when $\lambda = -1$, or $R_b = 3/4$.

For $n = 2$, the cyclic variables are
$C_0 = 1$,
$C_1 = x_1 + x_2$,
$C_{12} = x_1 x_2$,
and $\Lambda = 4 x_1 x_2$.
Thus,
$\Lambda \, C_0 = 4 C_{12}$,
$\Lambda \, C_1 = 4 \, {x_1}^2 \, x_2 + 4 \, x_1 \, {x_2}^2
  = 4 \, R \, (x_1 + x_2) - 4 ({x_1}^2 + {x_2}^2)
  = 4 \, R \, (x_1 + x_2) - 8 \, R + 4 \, (x_1 + x_2)
  = 4 \, (R + 1) \, C_1 - 8 \, R \, C_0$,
$\Lambda \, C_{12}
  = 4 \, (R-x_2)\, (R-x_1) 
  = 4\, R^2  \, C_0 - 4 \, R \, C_1 + 4 \, C_{12}$,
or
\[
\lambda
  \left( \begin{array}{c}
  C_0 \\
  C_1 \\
  C_{12}
  \end{array} \right)
 =
  \left( \begin{array}{ccc}
  0           & 0         & 4 \\
  -8R         & 4(R+1)    & 0 \\
  4R^2        & -4R       & 4
  \end{array} \right)
  \left( \begin{array}{c}
  C_0 \\
  C_1 \\
  C_{12}
  \end{array} \right),
\]
and \refeq{xr} reads
\begin{align}
0 &=
  \left| \begin{array}{ccc}
  \lambda     & 0               & -4 \\
  8R          & \lambda-4(R+1)  & 0 \\
  -4R^2       & 4R              & \lambda-4
  \end{array} \right| \nonumber\\
  &= (4R - 4 + \lambda) [(4 R-\lambda)^2 - 4 \lambda].
  \tag{\ref{eq:xr}-2}
\label{eq:xr2}
\end{align}
We only use the first factor
(the choice will be explained later, same for the following cases).
Setting it to zero yields $R = 1 - \lambda/4$;
$\lambda = +1$ gives the onset value $R_a = 3/4$ ($r_a = 3$)
while
$\lambda = -1$ gives the bifurcation value $R_b = 5/4$ ($r_b = 1+\sqrt 6$).
Note that the onset point of the only 2-cycle 
  is located at $R = 3/4$, 
  where the fixed point bifurcates \cite{strogatz}
  \big[compare \reffigs{cobweb}(a) and (b)\big].

For $n = 3$ \cite{saha, bechhoefer, gordon, burm, zhang},
we have
$C_0 = 1$,
$C_1 = x_1 + x_2 + x_3$,
$C_{12} = x_1 x_2 + x_2 x_3 + x_3 x_1$,
$C_{123} = x_1 x_2 x_3$,
and
$\Lambda = -8 x_1 x_2 x_3$.
The square-free reduction yields
\begin{align}
\lambda
\left( \begin{array}{c}
  C_0 \\
  C_1 \\
  C_{12} \\
  C_{123}
\end{array} \right)
=
\left( \begin{array}{cccc}
 0            & 0                 & 0               & -8 \\
 -24 R        & 8(R+1)            & -8R             & 0 \\
 24 R^2       & -8R(R+2)          & 8(R+1)          & 0 \\
 -8R^3        & 8R^2              & -8R             & 8
  \end{array} \right) 
\left( \begin{array}{c}
  C_0 \\
  C_1 \\
  C_{12} \\
  C_{123}
\end{array} \right),
\end{align}
and \refeq{xr} reads:
\begin{align}
  0
&=\left| \begin{array}{cccc}
 \lambda      & 0                 & 0               & 8 \\
 24 R         & \lambda - 8(R+1)  & 8R              & 0 \\
 -24 R^2      & 8R(R+2)           & \lambda-8(R+1)  & 0 \\
 8R^3         & -8R^2             & 8R              & \lambda-8
  \end{array} \right| \nonumber \\
&= -\big[ 
  64R^3 - 128 R^2 - 8 (\lambda - 8) R - (\lambda -8)^2 
  \big]
  \big(
  \lambda^2 - 8\lambda - 24 R\lambda - 64 R^3
  \big).
  \tag{\ref{eq:xr}-3}
\label{eq:xr3}
\end{align}
Using the first factor, we find at the onset point
  $\lambda = 1$, 
  $\left(
  R-\frac74
  \right)
  \left(
  R^2-\frac14 R + \frac{7}{16}
  \right)=0$
  and its only real solution is $R_a = 7/4$ ($r_a = 1+\sqrt 8$).
At the bifurcation point $\lambda = -1$,
  the equation $R^3-2R^2 + \frac{9}{8} R -\frac{81}{64}=0$
  yields
  $R_b = \frac14 
      \left(
      \frac{8}{3} +
        \sqrt[3]{\frac{1915}{54} - \frac52\sqrt{201}}
       +\sqrt[3]{\frac{1915}{54} + \frac52\sqrt{201}}
      \right)$,
whose corresponding $r = 1+\sqrt{1+4R}$
is identical to that in ref. \cite{gordon, burm}.

For $n = 4$ \cite{stephenson}, the cyclic variables are
$C_0 = 1$,
$C_1 = x_1 + x_2 + x_3 + x_4$,
$C_{12} = x_1 x_2 + x_2 x_3 + x_3 x_4 + x_4 x_1$,
$C_{13} = x_1 x_3 + x_2 x_4$,
$C_{123} = x_1 x_2 x_3 + x_2 x_3 x_4 + x_3 x_4 x_1 + x_4 x_1 x_2$
$C_{1234} = x_1 x_2 x_3 x_4$,
and
$\Lambda = 16 x_1 x_2 x_3 x_4$.
\refeq{xr} reads
\begin{align}
0 & = \left|
\begin{array}{cccccc}
 \lambda  & 0               & 0                 & 0     & 0     & -16 \\
 64 R     & \lambda-16(R+1) & 16 R              & 0     & -16R  & 0 \\
 -64 R^2  & 16R(R+2)        & \lambda-16(R^2+1) & -32 R & 16R   & 0 \\
 -32 R^2  & 16R(R+1)        & -16R              & \lambda-16(R^2+1)  & 0 & 0 \\
 64 R^3   & -16R^2(R+3)     & 16R(R+2)        & 32 R (R+1)       & \lambda-16(R+1) & 0 \\
 -16R^4   & 16R^3           & - 16R^2           & -16R^2            & 16R & \lambda - 16
\end{array}
\right | \nonumber  \\
& =
    \big[
    4096 R^6-12288 R^5 + 256 (\lambda + 48) (R^4- R^3)
      -16 (\lambda + 32)(\lambda - 16) R^2
      -(\lambda - 16)^3 \big]  \nonumber \\
& \quad
    \big[ 16 (R-1)^2 - \lambda \big]
 \, \big[(16 R^2 + \lambda)^2 - 16 (2R+1)^2 \lambda \big],
  \tag{\ref{eq:xr}-4}
\label{eq:xr4}
\end{align}
From the first factor, we have
$ ( 
    4R - 5
  )
\, \big[
  (4R + 1)^2 + 4
  \big]
\, \big[
  (4R - 3)^3 - 108
  \big] =0$
at the onset point $\lambda = 1$.
It has two real roots:
$R_a' = 5/4$ 
\big($r_a' = 1+\sqrt{6} \approx 3.4495$\big)
for the cycle from period-doubling the 2-cycle 
  \big[compare \reffigs{cobweb}(b) and (d)\big],
and
$R_a = (3+\sqrt[3]{108})/4$
\big($r_a = 1+\sqrt{4+\sqrt[3]{108}} \approx 3.9601$\big)
for an original cycle \big[\reffig{cobweb}(e)\big].
At the bifurcation point, $\lambda = -1$, and
$4096 R^6 - 12288 R^5 + 12032 (R^4 - R^3)
  + 8432 R^2 + 4913 = 0$,
which upon $R \rightarrow r(r-2)/4$ yields the same polynomial
obtained previously
\cite{bailey1, kk1, bailey2, lewis}.
The only two positive roots $r_b \approx 3.9608$
and $r_b' \approx 3.5441$
correspond to $r_a$ and $r_a'$ respectively.
As a verification, the polynomials are
alternatively derived in \refapd{per4}.

\begin{table}[h]\footnotesize
  \caption{
  Characteristic polynomials $A_n\RX$ of 
  the fixed points of $f^n$ of
  the simplified logistic map \refeq{logmaps}.
  }
\begin{center}
\begin{tabular}{lc}
\hline
  $n$ 
& $A_n(R, \; \lambda = 2^n X) 
  \big\slash
   2^{n \NB(n)}$ $^\dagger$  
\\
\hline
1
&
$X^2-X-R$
\\
2
&
  $
  \big[ \cancel{(R-X)^2 - X}\big]
  (R + X - 1)
  $
\\
3
&
$\begin{aligned}
\big[\cancel{
  X^2-(3R+1)X-R^3
}\big]
  \big[ 
  (X-1)^2
 + X R 
 - R(R-1)^2
  \big] 
\end{aligned}$
\\
4
&
$\begin{aligned}
\big[ \cancel{
  (R-1)^2 - X
} \big]
\big[ \cancel{ 
(R^2+X)^2 -(2R+1)^2 X 
} \big]
\big[
  R^6-3 R^5+(X+3) (R^4 - R^3) -(X+2)(X-1) R^2 - (X-1)^3
\big]
\end{aligned}$
\\
5
&
\begin{minipage}{.97\linewidth}
\vspace*{2mm}
$\begin{aligned}
&\big[ \cancel{
  X^2-5 X (R^2+R)-X - R^5
} \big]
\big[
-R^{15}+8 R^{14}-28 R^{13}+(X+60) R^{12}-(7 X+94) R^{11} 
+(3 X^2 +20 X
\\
+&116) R^{10}
-(11 X^2+33 X+114) R^9 
+2(3 X^2+20 X+47) R^8
-(2 X^3-20 X^2+37 X+69) R^7 \\
+&(3 X-11)(3X^2-3X+44) R^6
-(X-1) (3 X^3+20 X^2-33 X-26) R^5 
+(X-1)^2 (3 X^2+27 X+14) R^4 \\
+&(X-1)^3 (6 X+5) R^3 
+(X-1)^4 (X+2) R^2+(X-1)^5 R+(X-1)^6
\big]
\end{aligned}$
\vspace{1mm}
\end{minipage}
\\
6
&
$\begin{aligned}
&\big[\cancel {
  (R-1)^3 + X
}\big]
\big[\cancel {
 (R^3 - X)^2 - (3R+1)^2 X 
}\big]
\big[\cancel {
\big(R(R-1)^2-1-X\big)^2-(R-2)^2 X
}\big]
\big[
R^{27}-13 R^{26}+78 R^{25}\\
&+(X-293) R^{24} 
+\dots
+(X-1)^6 (X^2+10 X+3) R^3
+(X-1)^7 (X+1) R^2-(X-1)^8 R+(X-1)^9
\big]
\end{aligned}$
\\
7
&
\begin{minipage}{.97\linewidth}
\vspace{1mm}
$\begin{aligned}
&\big[\cancel {
  -R^7 - 7 X R (R+1)^2 + X^2 - X
}\big]
\big[
-R^{63}+32 R^{62}-496 R^{61}
+4976 R^{60} + (X - 36440) R^{59}
- (30X
\\
-&208336) R^{58}
+\dots
+(X-1)^{15} (X^2+14 X+5) R^3
+2(X-1)^{16} (X+1) R^2
+(X-1)^{17} R
+(X-1)^{18}
\big]
\end{aligned}$
\end{minipage}
\\
\vdots & \vdots
\\
\hline
\multicolumn{2}{p{\textwidth}}{
$^\dagger$
$\NB(n) = (1/n) \sum_{d|n} \phi(n/d) 2^d$ [\refeq{necklace}]
is the number of the square-free cyclic polynomials.
The change of variable $\lambda \rightarrow X$ and 
  the division by $2^{n \NB(n)}$ make the polynomials more compact. 
The irrelevant factors from shorter cycles 
(see \refsec{primfac}) are struck out.
The polynomials of $r$ for the original logistic map \refeq{logmap}
can be obtained by $R\rightarrow r(r-2)/4$.
} \\
\hline
\end{tabular}
\end{center}
\label{tab:Anlog}
\end{table}

The algorithm was coded into a Mathematica program, 
which was used to compute the polynomials for $n$ up to 13.
The polynomials for a general $\lambda$
and those at $\lambda = \pm1$ (onset and bifurcation points) are listed
in \reftab{Anlog} and \reftab{Pnlog}, respectively, for some small $n$.
For complex $R$, $\lambda$, and $\vx$,
the method can also compute the region of stability for $R$,
with $\lambda$ being $\exp(i\phi)$ \big[$\phi \in (0, 2\pi)$\big] 
  instead of $\pm1$;
the results are shown in \reffig{logbulb} 
for $n$ up to $8$.
For polynomials of larger $n$, see the website in \refsec{end}.
The representative $r$ values are listed in \reftab{rval}.

\begin{table}[h]\footnotesize
  \caption{
  Smallest positive $r$ at the onset and bifurcation points
  of the $n$-cycles of the logistic map.
  }
\begin{center}
\begin{tabularx}{\textwidth}{
  >{\hsize=0.5\hsize\centering\arraybackslash}X  
  >{\hsize=1.6\hsize}X  
  >{\hsize=1.6\hsize}X  
  >{\hsize=0.3\hsize\raggedright\arraybackslash}X | 
  >{\hsize=0.5\hsize\centering\arraybackslash}X  
  >{\hsize=1.6\hsize}X  
  >{\hsize=1.6\hsize}X  
  >{\hsize=0.3\hsize\raggedright\arraybackslash}X  
}
\hline
  $n^\dagger$ 
& Onset$^\ddagger$  
& Bifurcation$^\ddagger$
& \#$^*$ 
&
  $n^\dagger$     
& Onset$^\ddagger$
& Bifurcation$^\ddagger$
& \#$^*$ \\
\hline
$1$     & $1.0000000000_1$      &  $3.0000000000_1$       & 1   &
$8'$    & $3.9607686524_{6}$    &  $3.9610986335_{120}$   & 1   \\
$2'$    & $3.0000000000_1$      &  $3.4494897428_1$       & 1   &
$8'''$  & $3.5440903596_{6}$    &  $3.5644072661_{120}$   & 1   \\
$3$     & $3.8284271247_1$      &  $3.8414990075_3$       & 1   &
$9$     & $3.6871968733_{240}$  &  $3.6872742105_{252}$   & 28  \\
$4$     & $3.9601018827_3$      &  $3.9607686524_6$       & 1   &
$10$    & $3.6052080669_{472}$  &  $3.6059169323_{495}$   & 48  \\
$4''$   & $3.4494897428_1$      &  $3.5440903596_6$       & 1   &
$10'$   & $3.7411207566_{15}$   &  $3.7425706462_{495}$   & 3   \\
$5$     & $3.7381723753_{11}$   &  $3.7411207566_{15}$    & 3   &
$11$    & $3.6817160194_{1013}$ &  $3.6817266457_{1023}$  & 93  \\
$6$     & $3.6265531617_{20}$   &  $3.6303887000_{27}$    & 4   &
$12$    & $3.5820230011_{1959}$ &  $3.5828117795_{2010}$  & 165 \\
$6'$    & $3.8414990075_{3}$    &  $3.8476106612_{27}$    & 1   &
$12'$   & $3.6303887000_{27}$   &  $3.6321857392_{2010}$  & 4   \\
$7$     & $3.7016407642_{57}$   &  $3.7021549282_{63}$    & 9   &
$12''$  & $3.8476106612_{27}$   &  $3.8490363152_{2010}$  & 1   \\
$8$     & $3.6621089132_{108}$  &  $3.6624407072_{120}$   & 14  &
$13$    & $3.6797024578_{4083}$ &  $3.6797038498_{4095}$  & 315 \\
\hline
\multicolumn{8}{p{\textwidth}}{
$^\dagger$
  $\,'$, $\,''$, or $\,'''$ means
    a cycle undergoing
    the first, second, or third successive period-doubling, respectively.
} \\
\multicolumn{8}{p{\textwidth}}{
$^\ddagger$
  The subscripts are the degrees of the corresponding minimal polynomial
    of $R = r(r-2)/4$.
} \\
\multicolumn{8}{p{\textwidth}}{
$^*$
  The number of similar cycles.
} \\
\hline
\end{tabularx}
\end{center}
\label{tab:rval}
\end{table}

\subsection{\label{sec:primfac}Minimal polynomial for the $n$-cycles}

The factors ignored in \refsec{examples}
  come from shorter $d$-cycles whose periods $d$ divide $n$,
  because $A_n\RX$, 
  from the characteristic equation \refeq{xr},
  is derived for all fixed points of $f^n$,
  and thus encompasses the shorter cycles as well.
We filter the contributions from the shorter cycles by the following theorem.

\begin{theorem}
  The minimal polynomial $P_n\RX$ of all $n$-cycles
  is a factor of $A_n\RX$ [defined in \refeq{xr}],
  and can be computed as
  \begin{equation}
    P_n\RX
    = \prod_{cd = n} B_{d,c}\RX^{\mu(c)},
  \label{eq:primfac}
  \end{equation}
where
  $B_{d, c}\RX \equiv \prod_{k=1}^c A_{d}(R, e^{2k\pi i/c} \lambda^{1/c})$,
  $\lambda^{1/c}$ is a complex $c$th root of $\lambda$,
  and $\mu(c)$ is the M\"obius function.
  \label{thm:primfac}
\end{theorem}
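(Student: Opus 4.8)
The plan is to establish a \emph{forward} factorization that writes $A_n\RX$ as a product of contributions from the primitive cycles of each period $d\mid n$, and then to recover $P_n\RX$ by Möbius inversion. The single geometric input is the behaviour of the multiplier under period-multiplication: if $y$ lies on a primitive $d$-cycle and $n=cd$, then $y$ is a fixed point of $f^{n}$, and since the orbit simply repeats $c$ times, the chain rule generalizing \refeq{logder} gives an $f^{n}$-multiplier $\big((-2)^{d}x_1\cdots x_d\big)^{c}=\Lambda_d^{\,c}$, where $\Lambda_d$ is the $f^{d}$-multiplier of the cycle. Hence, setting the $f^{n}$-multiplier equal to $\lambda$ forces $\Lambda_d$ to be a $c$-th root of $\lambda$.

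First I would record the forward relation. By \refthm{main}, the roots in $R$ of $A_n\RX$ are exactly the parameter values admitting a fixed point of $f^{n}$ with $f^{n}$-multiplier $\lambda$; every such point belongs to a primitive $d$-cycle ($d\mid n$) whose $f^{d}$-multiplier ranges over the $c$-th roots of $\lambda$ ($c=n/d$) by the identity above. Collecting these loci gives
\[
  A_n\RX=\prod_{cd=n}\;\prod_{k=1}^{c}P_d\big(R,\,e^{2k\pi i/c}\lambda^{1/c}\big).
\]
For any polynomial $G\RX$ write $(G)^{[c]}\RX\equiv\prod_{k=1}^{c}G\big(R,\,e^{2k\pi i/c}\lambda^{1/c}\big)$, the product over all $c$-th roots of $\lambda$; this is symmetric in the roots of $z^{c}-\lambda$ and hence a genuine polynomial in $R$ and $\lambda$, and one checks directly that $\big((G)^{[a]}\big)^{[b]}=(G)^{[ab]}$, that $(G)^{[1]}=G$, and that $B_{d,c}\RX=(A_d)^{[c]}\RX$. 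The $c=1$ term of the forward relation is $P_n\RX$ itself, so $P_n\RX\mid A_n\RX$.

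Next I would carry out the inversion. Taking a formal logarithm turns products into sums: with $a_n=\log A_n\RX$, $p_n=\log P_n\RX$, and the linear operator $L_c$ defined by $(L_c g)(R,\lambda)=\sum_{\zeta^{c}=\lambda}g(R,\zeta)$, the forward relation reads $a_n=\sum_{cd=n}L_c\,p_d$. The $L_c$ commute and satisfy $L_aL_b=L_{ab}$ with $L_1=\mathrm{id}$, so with respect to Dirichlet convolution the sequence $c\mapsto L_c$ has inverse $c\mapsto\mu(c)L_c$: indeed $\sum_{c\mid m}\mu(c)L_cL_{m/c}=\big(\sum_{c\mid m}\mu(c)\big)L_m=\delta_{m,1}\,\mathrm{id}$. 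Convolving with this inverse yields $p_n=\sum_{cd=n}\mu(c)\,L_c\,a_d$, and exponentiating gives
\[
  P_n\RX=\prod_{cd=n}\big[(A_d)^{[c]}\RX\big]^{\mu(c)}=\prod_{cd=n}B_{d,c}\RX^{\,\mu(c)},
\]
which is \refeq{primfac}.

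The main obstacle is securing the forward relation as an \emph{exact polynomial identity}, not merely a set-theoretic agreement of root loci. Three points need care: that the determinant $A_n\RX$ carries no spurious factors beyond the periodic-point locus and vanishes there to the correct order; that each primitive $d$-cycle is counted exactly once, with the $c$-th-root correspondence $\Lambda_d^{\,c}=\lambda$ supplying precisely the factors of $(P_d)^{[c]}$; and that the two sides therefore match in multiplicity divisor by divisor. Once this is in hand the inversion is purely formal. Polynomiality of $B_{d,c}\RX$ follows from the symmetry argument above, and the apparent division by the $\mu(c)=-1$ factors is automatically exact, since $P_n\RX$ is a bona fide polynomial and the displayed formula is an identity of rational functions; the small cases in \reftab{Anlog} (e.g.\ $A_2=(P_1)^{[2]}P_2$ with $(P_1)^{[2]}=(R-X)^2-X$ the struck-out factor) serve as a useful check.
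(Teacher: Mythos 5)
Your route mirrors the paper's own: first a forward factorization $A_n\RX = \prod_{cd=n} Q_{d,c}\RX$ with $Q_{d,c}\RX = \prod_{k=1}^{c} P_d(R, e^{2k\pi i/c}\lambda^{1/c})$ (this is the paper's \refthm{prod}), then M\"obius inversion via the composition law of the root-extraction operation (your identity $L_aL_b=L_{ab}$ is exactly what the paper uses when it writes $B_{n,m}\RX = \prod_{c|n} Q_{n/c,\,mc}\RX$ and inverts formal logarithms). Your inversion half is correct and arguably cleaner than the paper's presentation.

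The genuine gap is that the forward factorization is never proved. You derive it by ``collecting these loci,'' and then you yourself list the three points that need care --- no spurious factors in $A_n\RX$, multiplicity exactly one per cycle, divisor-by-divisor matching --- but you leave them unresolved, and \refthm{main} cannot supply them: it gives only necessity (the determinant vanishes at cycle parameters), not that $A_n\RX$ \emph{equals} the product of cycle factors. The paper closes precisely this gap with a degree count in $\lambda$: it proves $\deg_\lambda A_n\RX = N(n)$, the number of binary necklaces, via the bijection between square-free cyclic polynomials and necklaces (\refthm{necklace}); it proves $\deg_\lambda P_n\RX = L(n)$, the number of $n$-cycles (\refthm{lyndon}, which itself requires that $f^n$ generically has $2^n$ distinct fixed points and that multipliers of distinct cycles can be separated --- handled by the perturbation $\Lambda + \epsilon Y$ when two $\lambda_c$ coincide); and it uses the identity $N(n) = \sum_{d|n} L(d)$ [\refeq{necklacelyndon}]. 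Since each cycle of period $d|n$ contributes a distinct root in $\lambda$ and these already exhaust the degree $N(n)$ of $A_n\RX$, there is no room for spurious factors or repeated factors; monicity of the determinant in $\lambda$ (and, by induction, of each $P_d$ and $Q_{d,c}$) then pins the remaining constant multiple to one. Without this counting apparatus --- the bulk of \refsec{degprimfac} --- your forward relation remains a statement about root sets, and the ``purely formal'' inversion has no exact polynomial identity to act on.
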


\begin{remark}[1]
The M\"obius function $\mu(n)$ is $(-1)^k$
  if $n$ is the product of $k$ distinct primes,
  or 0 if $n$ is divisible by a square of a prime.
$\mu(n)$ = 1, $-1$, $-1$, 0, $-1$, 1, \ldots, starting from $n = 1$.
The $\mu(n)$ is useful for inversion:
$g(n) = \sum_{d|n} \mu(n/d) \, h(d)$
if and only if $h(n) = \sum_{d|n} g(d)$
\cite{hardy}.
\end{remark}

\begin{remark}[2]
$B_{d,c}\RX$ is a polynomial of $\lambda$.
Despite the argument $\lambda^{1/c}$,
  the product $\prod_{k=1}^c A_d(R, e^{2k\pi i/c} \lambda^{1/c})$
  is free from radicals of powers of $\lambda^{1/c}$,
  for it is invariant under $\lambda \rightarrow e^{2\pi i} \lambda$;
and $\deg_\lambda B_{d,c}\RX = \deg_\lambda A_d\RX$.
Particularly, $B_{n,1}\RX = A_n\RX$.
\end{remark}


Let us see some examples.
For $n = 1$,
there is no irrelevant factor
in \refeq{xr1} and
$P_1\RX = B_{1,1}\RX = A_1\RX = \lambda^2 - 2\lambda - 4R$.

For $n = 2$, since
$B_{1,2}\RX
=
(\lambda +2\sqrt{\lambda} - 4R)
\,
(\lambda -2\sqrt{\lambda} - 4R)
=(4R-\lambda)^2 -4\lambda$,
$P_2\RX = A_2\RX \, B_{1,2}\RX^{-1} = 4R - 4 + \lambda$.

For $n = 3$,
one can verify that 
$B_{1,3}\RX 
= \prod_{k=1}^3 A_1(R, e^{2k\pi i/3} \, \sqrt[3]{\lambda})
=  \lambda^2 - 24 R \lambda - 8 \lambda - 64 R^3$.
So
$P_3\RX = A_3\RX \, B_{1,3}\RX^{-1} 
= -\big[64 R^3- 128 R^2 -8(\lambda -8)R-(\lambda-8)^2\big]$.

For $n = 4$,
we have
$16 (R-1)^2 - \lambda
= (4R - 4 + \sqrt{\lambda})(4R - 4 - \sqrt{\lambda})$
and
$
(16 R^2 + \lambda)^2 - 16 (2R+1)^2 \lambda
=
\big[(4R - \sqrt{\lambda})^2  - 4\sqrt{\lambda}\,\big]
\big[(4R + \sqrt{\lambda})^2  + 4\sqrt{\lambda}\,\big]$.
Thus, the last two factors of \refeq{xr4}
can be written as
$B_{2,2}\RX = \prod_{k=1}^2 A_2(R, e^{k \pi i} \sqrt{\lambda})$,
and
$P_4\RX  
    = A_4\RX \, B_{2,2}\RX^{-1}
    = 4096 R^6-12288 R^5+ 256 (\lambda + 48) (R^4- R^3)
    -16(\lambda + 32)(\lambda - 16) R^2 - (\lambda-16)^3$.
Note, $B_{1,4}\RX$ is unused for $\mu(4) = 0$.

The irrelevant factors for $n$ up to 7 are listed in \reftab{Anlog}.

\newcommand{\T}{{R_4}}

\begin{table}[h]\footnotesize
  \caption{
  Onset and bifurcation polynomials 
  of the $n$-cycles of the simplified logistic map \refeq{logmaps}.}
\begin{center}
\begin{tabular}{lcc}
\hline
  $n$ 
& Onset $P_n(R, \; +1)$ $^{\dagger, \ddagger}$
& Bifurcation $P_n(R, \; -1)$ $^\dagger$
\\
\hline
1
&
$-\T-1$
&
$-\T+3$
\\
2
&
$-(\cancel{3-\T})$
&
$\T-5$
\\
3
&
$\begin{aligned}
-
\big( \cancel{
  \T^2-\T+7
}\big)
  (\T-7) 
\end{aligned}$
&
$\begin{aligned}
 - \T^3 + 8\T^2 - 18\T + 81
\end{aligned}$
\\
4
&
$\begin{aligned}
\big( \cancel{
  \T - 5
} \big)
\big[ \cancel{ 
  (\T+1)^2 + 4
} \big]
\big[
  (\T-3)^3 - 108
\big]
\end{aligned}$

&

$
\T^6-12 \T^5+47 \T^4-188 \T^3+527 \T^2+4913
$
\\
5

&

\begin{minipage}{.48\linewidth}
$\begin{aligned}
-
&\big( \cancel{
\T^4-\T^3+\T^2+9 \T+31
} \big)
\big(
\T^{11}-31 \T^{10} \\
+&416 \T^9 -3404 \T^8 +20548 \T^7-98258 \T^6 \\
+&370146 \T^5 -1171676 \T^4 +3301996 \T^3\\
-&7507332 \T^2+15699857 \T-28629151
\big)
\end{aligned}$
\end{minipage}

&

\begin{minipage}{.49\linewidth}
\vspace{1mm}
$\begin{aligned}
-&\T^{15}+32 \T^{14}-448 \T^{13}+3838 \T^{12}-24008 \T^{11}\\
+&118147 \T^{10} -462764 \T^9+1519712 \T^8\\
-&4444924 \T^7 +11351480 \T^6 -26978787 \T^5\\
+&58697100 \T^4 -88548768 \T^3+149426046 \T^2\\
-&313083144 \T +1291467969
\end{aligned}$
\vspace{1mm}
\end{minipage}

\\
6

&

\begin{minipage}{.48\linewidth}
$\begin{aligned}
&\big( \cancel{
\T^2-9 \T+21
} \big)
\big( \cancel{
\T^2+3 \T+3
} \big)
\big( \cancel{
\T^3-8\T^2} \\ 
+&\cancel{18\T-81
} \big)
\big(
\T^{20}-\dots+3063651608241
\big)
\end{aligned}$
\end{minipage}

&

\begin{minipage}{.49\linewidth}
$\begin{aligned}
&\T^{27}-52 \T^{26}+1248 \T^{25}-18753 \T^{24}+\dots\\
&-5098317006250000 \T-20711912837890625
\end{aligned}$
\end{minipage}

\\
7

&

\begin{minipage}{.48\linewidth}
$\begin{aligned}
-
&\big( \cancel{
\T^6-\T^5+\T^4-\T^3
 +15T^2+97 \T+127
} \big) \\
&\big(
\T^{57}
-127\T^{56}
+\dots+
58165204\dots8504447
\big)
\end{aligned}$
\end{minipage}

&

\begin{minipage}{.49\linewidth}
$\begin{aligned}
&-\T^{63}+128 \T^{62}-7936 \T^{61}+318464 \T^{60}-\dots\\
&-2427583\dots6441888 \T +9786215\dots0031361
\end{aligned}$
\end{minipage}

\\
\vdots & \vdots & \vdots
\\
13
&

\begin{minipage}{.48\linewidth}
$\begin{aligned}
-&\big(\cancel{
\T^{12}-\T^{11}+\T^{10}
-\dots
+18433 \T+8191
}\big) \\
\big(
&\T^{4083}
 - 8191\T^{4082} 
+ 33529856 \T^{4081}
- \dots \\
-&30826655683291995
\dots
27828275886475354111
\big)
\end{aligned}$
\end{minipage}

&

\begin{minipage}{.49\linewidth}
$\begin{aligned}
-&\T^{4095} + 8192 \T^{4094} -33538048 \T^{4093} \\
+&\dots 
-7361199006999 
\dots
96207964555264\T
\\
+&29448390363448812
\dots
352154556569141249
\end{aligned}$
\end{minipage}

\\
\hline
\multicolumn{3}{p{\textwidth}}{
$^\dagger$ $\T = 4R$. 
The polynomials of $r$ for the original logistic map \refeq{logmap}
can be obtained by $\T\rightarrow r(r-2)$.
} \\
\multicolumn{3}{p{\textwidth}}{
$^\ddagger$
Factors for the $n$-cycles born out of shorter cycles (see \refsec{origfac})
are struck out.
} \\
\hline
\end{tabular}
\end{center}
\label{tab:Pnlog}
\end{table}

\begin{figure}[h]
  \begin{minipage}{\linewidth}
    \begin{center}
        \includegraphics[angle=-90, width=.67\linewidth]{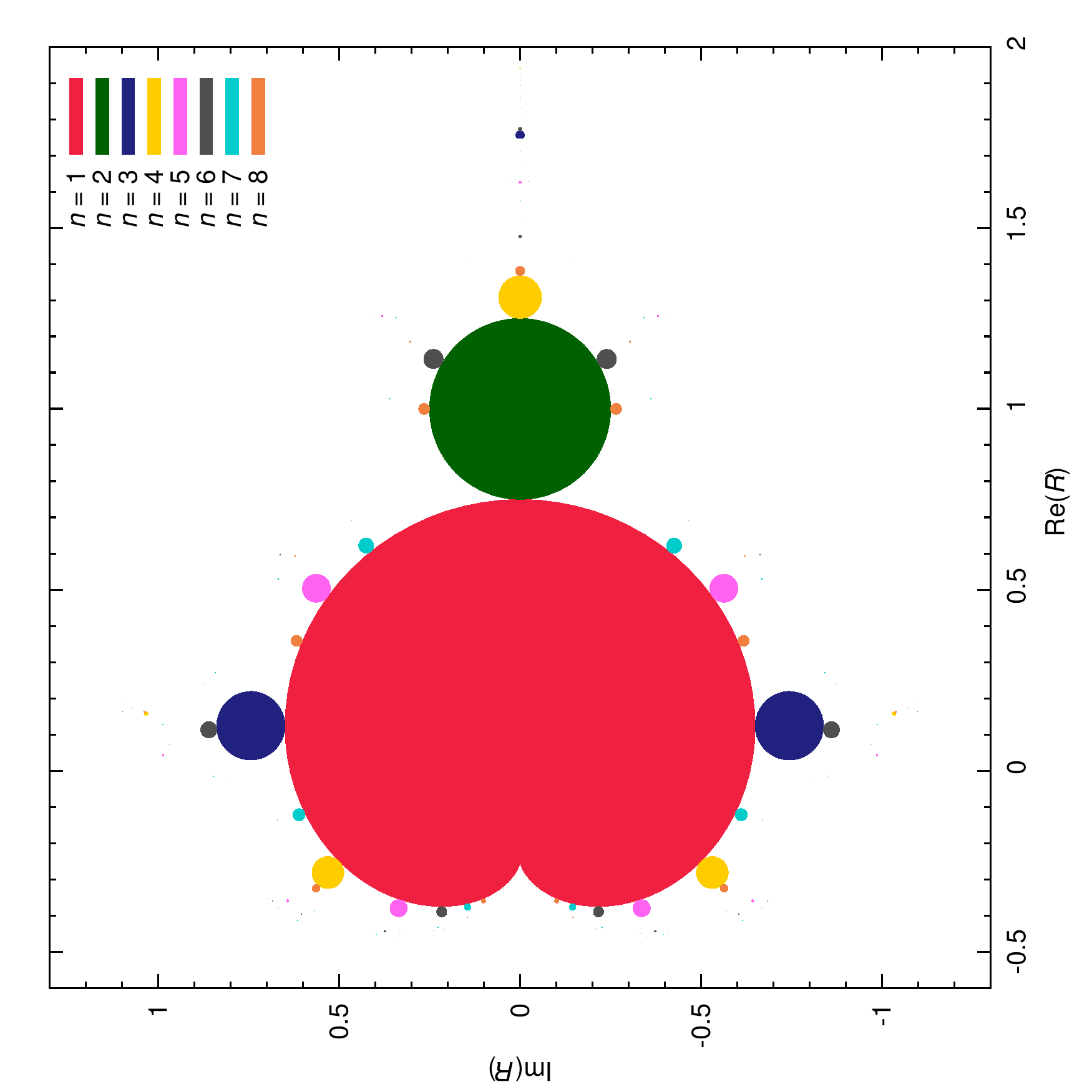}
    \end{center}
  \end{minipage}%
  \caption{
  \label{fig:logbulb}
  Stable regions of the $n$-cycles of the simplified logistic map \refeq{logmaps} 
  with a complex $R$;
  obtained from $P_n(R, e^{i\phi}) = 0$ with $\phi \in [0, 2\pi]$
  cf. Fig. 3 in \cite{stephenson}.
}
\end{figure}

\refthm{primfac} is not always necessary.
For $n \ge 4$, 
  $P_n\RX$ is readily recognized as
  the factor
  of $A_n\RX$ 
  with the highest degree in $R$,
  see \reftab{Anlog} and \refsec{degR}.
It can be, however, problematic,
   if $P_n(R, \lambda)$ is solved for $\lambda = 1$ 
   instead of a general $\lambda$,
   for $P_n(R, 1)$ itself can be further factorized,
  see \reftab{Pnlog}.
Due to the technical nature of the derivation and subsequent discussions, 
  the reader may wish to skip the rest of \refsec{logmap}
  on first reading.

\subsection{\label{sec:degprimfac}Counting cycles}

To show \refthm{primfac}, we first find the degrees in $\lambda$ of 
  $A_n\RX$ (\refthm{necklace}) and $P_n\RX$ (\refthm{lyndon}).
By comparing the degrees,
  we then show that each $P_d\RX$ with ($d|n$),
  after some transformation,
  contributes one polynomial factor to $A_n\RX$
  (\refthm{prod}),
  and the inversion of the relation
  yields \refthm{primfac}.

\subsubsection{Number of the square-free cyclic polynomials}

To count the square-free cyclic polynomials, 
  we establish a one-to-one mapping
  between the square-free cyclic polynomials
  and the binary necklaces (defined below).
The task is then to count the latter.
 
A binary necklace
  is a nonequivalent binary $0$-$1$ string.
Two strings are equivalent if they differ only by a circular shift.
For example, for $n = 3$, there are $2^3 = 8$ binary strings,
but only four necklaces: 000, 001, 011 and 111,
since 010 and 100 are equivalent to 001,
so are 110 and 101 to 011.
The period of a necklace, or a binary string,
  is the length of the shortest non-repeating sub-sequence,
  e.g., the periods of $1111$, $0101$ and $0001$
        are 1, 2, and 4, respectively.
Obviously, the period $m$ divides $n$;
  and a period-$m$ necklace 
  encompasses $m$ binary strings
  differed by circular shifts,
  e.g., $0101$ represents
  both $0101$ and $1010$.

\begin{table}[h]\footnotesize
\caption{Two ways of counting the $N(4) = 6$ necklaces for $n = 4$. 
}
\begin{center}
\begin{tabularx}{\textwidth}{c | c  c | c | c | c | c | c }
\hline
        & \multicolumn{2}{c|}{$m = 1$} 
        & $m = 2$   
        & $m = 4$   & $2^d$ & $\phi(n/d)$ & $\phi(n/d) 2^d$\\ 
\hline 
$d = 1$ & \hspace{1mm} $\mathbf{0}_{\times4}$ \hspace{1mm} 
        & \hspace{1mm} $\mathbf{1}_{\times4}$ \hspace{1mm} & & & $2 = 2$ & $2$ & 4 \\
\hline
$d = 2$ & $\mathbf{00}_{\times2}$ & $\mathbf{11}_{\times2}$
        & \hspace{1mm} $\mathbf{10}_{\times2} (01_{\times2})$
          \hspace{1mm} & & $2 + 2 =4$ & $1$ & 4 \\
\hline
\multirow{3}{*}{$d = 4$} 
      & & & & \hspace{1mm} $\mathbf{1000} (0100, 0010, 0001)$
              \hspace{1mm} & $2 + 2 + 12$& \\
      & $\mathbf{0000}$ & $\mathbf{1111}$ & $\mathbf{1010} (0101)$
            & $\mathbf{1100} (1001, 0110, 0011)$ & $=16$ & $1$ & 16 \\
      & & & &   $\mathbf{1110} (1101, 1011, 0111)$ & &  \\
\hline
$m \sum_{m|d, d|n} \phi(\frac{n}{d})$
      & \multicolumn{2}{c|}{$1\cdot(2+1+1) = 4$}
      & $2\cdot(1+1) = 4$ & $4\cdot1 = 4$ &  \multicolumn{2}{c|}{} &  $\downarrow$\\
\hline
$N(n)\cdot n$ & \multicolumn{2}{c|}{$2 \cdot 4$} & $1 \cdot 4$ & $3 \cdot 4$ & 
  \multicolumn{2}{c|}{$\rightarrow$} & $6\cdot4 = 24$ \\
\hline
\multicolumn{8}{p{\linewidth}}{
  Bold strings are necklaces; others are their cyclic versions.
}
\\
\multicolumn{8}{p{\linewidth}}{
The subscript of a binary string means the number of repeats;
e.g.,
$\mathbf{0}_{\times4}$ means $\mathbf{0}$ repeated four times, or $\mathbf{0000}$;
$\mathbf{10}_{\times2}$ means $\mathbf{10}$ repeated twice, or $\mathbf{1010}$, etc.
}
\\
\hline
\end{tabularx}
\end{center}
\label{tab:countnecklace}
\end{table}

To compute the number of the binary necklaces $N(n)$,
we construct a sum for the length-$n$ binary strings and count it in two ways.
\reftab{countnecklace} shows the example for the $n=4$ case.
For each divisor $d$ of $n$, we collect 
  all length-$n$ binary strings
  whose periods $m$ divide $d$.
The total is $2^d$,
  for we have enumerated all binary strings whose periods divide $d$.
%
We then weight them by the Euler's totient function $\phi(n/d)$.
Here, $\phi(m)$ gives the number of integers from 1 to $m$
  that are coprime to $m$,
  e.g.,
  $\phi(1) = 1$,
  $\phi(3) = 2$ for 1 and 2,
  $\phi(6) = 2$ for 1 and 5.
We repeat the process over other divisors $d$ of $n$,
  and the resulting sum is $\sum_{d|n} \phi(n/d) 2^d$.
The process for a fixed $d$ is exemplified 
  by a row in \reftab{countnecklace}.

We can count the above sum in another way.
We recall that a period-$m$ necklace always contributes 
  $m$ strings in the above process for a fixed $d$,
  and it does so for all multiples $d$ of $m$.
So the total weighted contribution by this necklace is
\begin{equation}
m \sum_{m|d, d|n} \phi(n/d) = 
 m \sum_{\frac{n}{d}|\frac{n}{m}} \phi(n/d) = m \,  \frac{n}{m}  = n,
\label{eq:mphind}
\end{equation}
where we have used the identity $\sum_{d'|m'} \phi(d') = m'$,
with $d' = n/d$ and $m' = n/m$.
Summing over the necklaces yields $N(n) \, n$. Thus,
$N(n) = (1/n) \sum_{d|n} \phi(n/d) 2^d$.
The process for a fixed necklace is exemplified
  by a column in \reftab{countnecklace}.

Back to our problem, there is
  a correspondence between the binary necklaces
  and the square-free cyclic polynomials.
For a square-free cyclic polynomial,
  we construct a binary string according to its generator:
if it contains $x_k$, the $k$th character from the left is 1,
  otherwise 0.
The resulting string corresponds to a unique necklace;
  the alternative generators give
  the circularly-shifted binary strings.
The mapping is reversible, or one-to-one.
For example, if $n = 3$,
 the square-free polynomials for $000$, $100$, $110$ and $111$ are
  $C_0 = 1$ (generator: 1), 
  $C_1 = x_1 + x_2 + x_3$ (generator: $x_1$),
  $C_{12} = x_1 x_2 + x_2 x_3 + x_3 x_1$ (generator: $x_1 x_2$)
  and $C_{123} = x_1 x_2 x_3$ (generator: $x_1 x_2 x_3$), respectively.
\reftab{sqrfreepoly} shows a few more examples.
Thus,

\begin{theorem}
The number $\NB(n)$ of the square-free cyclic polynomials $C_p(\vx)$ ($p \in \Bset$)
  formed by $\vx = \{x_1, \ldots, x_n\}$ is
\begin{equation}
  \NB(n) = N(n) = (1/n) \sum_{d|n} \phi(n/d) 2^{\,d},
\label{eq:necklace}
\end{equation}
which is also equal to $\deg_\lambda A_n\RX$.
\label{thm:necklace}
\end{theorem}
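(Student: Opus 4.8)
The plan is to prove the three assertions of \refthm{necklace} in turn: the bijection $\NB(n) = N(n)$, the closed form for $N(n)$, and the degree identity $\deg_\lambda A_n\RX = \NB(n)$. For the first, I would turn the correspondence sketched above into an explicit bijection between the square-free cyclic polynomials $C_p(\vx)$ and the binary necklaces of length $n$. Given $C_p(\vx)$, its monomial generator $x_{k_1}\cdots x_{k_j}$ determines a length-$n$ binary string whose $k$th bit records whether $x_k$ divides the generator; the alternative generators of $C_p(\vx)$ are exactly the cyclic shifts of this monomial, so they yield circularly-shifted strings, i.e.\ one and the same necklace. Conversely, any representative string of a necklace gives a square-free monomial, whose cyclic sum is a square-free cyclic polynomial independent of the representative chosen. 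Checking that these two maps are mutually inverse establishes $\NB(n) = N(n)$.

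For the closed form I would count the binary necklaces by the double-counting organized in \reftab{countnecklace}. Form the weighted sum $\sum_{d|n}\phi(n/d)\,2^d$, where for each divisor $d$ the factor $2^d$ enumerates all length-$n$ strings whose period $m$ divides $d$. Counting the same sum necklace-by-necklace, a period-$m$ necklace contributes $m$ strings for every $d$ that is a multiple of $m$ and a divisor of $n$, so its total weighted contribution is
\[
  m \sum_{m\,|\,d,\; d\,|\,n}\phi(n/d)
  = m\sum_{\frac nd \,\mid\, \frac nm}\phi(n/d)
  = m\cdot\frac nm = n,
\]
using the totient identity $\sum_{d'|m'}\phi(d') = m'$ with $d' = n/d$ and $m' = n/m$. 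Summing over all necklaces gives $N(n)\,n$ on one side and $\sum_{d|n}\phi(n/d)\,2^d$ on the other, which is \refeq{necklace}.

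Finally, for the degree identity I would appeal directly to the definition \refeq{xr}, $A_n\RX = |\lambda\,\vct I - \vct T(R)|$, the characteristic polynomial of the $\NB\times\NB$ matrix $\vct T(R)$. The variable $\lambda$ enters only through the diagonal term $\lambda\,\vct I$, so in the Leibniz expansion of the determinant the highest power of $\lambda$ comes solely from the product of diagonal entries $\prod_{p\in\Bset}\bigl(\lambda - T_{pp}(R)\bigr)$, contributing the monomial $\lambda^{\NB}$ with coefficient $1$ for every $R$. Every off-diagonal entry $-T_{pq}(R)$ ($p\ne q$) is free of $\lambda$, so all other permutations reach strictly lower $\lambda$-degree and cannot cancel the leading term; hence $\deg_\lambda A_n\RX = \NB(n)$ exactly.

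The two counts are routine; the one point demanding care is verifying that the generator-to-string correspondence descends to a genuine bijection on necklaces, well-defined and invertible on both sides, since a single cyclic polynomial admits several generators and a single necklace several representative strings. Once that is pinned down, the totient double count and the elementary fact that the characteristic polynomial of an $m\times m$ matrix has $\lambda$-degree $m$ finish the proof.
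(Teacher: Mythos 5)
Your proposal is correct and follows essentially the same route as the paper: the same generator-to-binary-string bijection between square-free cyclic polynomials and necklaces, the same totient-weighted double count of strings (including the identical identity $\sum_{d'|m'}\phi(d')=m'$), and the degree claim from the characteristic-polynomial structure of $A_n\RX$, which the paper leaves implicit but which you rightly spell out via the Leibniz expansion. No gaps.
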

$N(n)$ = 2, 3, 4, 6, 8, 14, 20, 36, 60, 108, 188, 352, 632, \ldots, starting from $n = 1$.

\subsubsection{Number of the $n$-cycles}

\begin{theorem}
The degree in $\lambda$ of the minimal polynomial $P_n\RX$ of all $n$-cycles
  is equal to the number of the $n$-cycles,
  and is given by
\begin{equation}
  \deg_\lambda P_n\RX = L(n) = (1/n)\sum_{d | n} \mu(n/d) 2^{\,d}.
\label{eq:lyndon}
\end{equation}
\label{thm:lyndon}
\end{theorem}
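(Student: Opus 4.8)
The plan is to establish the two claims separately: that $\deg_\lambda P_n\RX$ equals the number of $n$-cycles, and that this number is $L(n)=(1/n)\sum_{d|n}\mu(n/d)\,2^{\,d}$. Crucially, I would count the $n$-cycles directly from the degree of the iterated map, so as not to circularly invoke \refthm{primfac}, whose proof relies on the present result.

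For the first claim I would use that the multiplier $\Lambda(\vx)=(-2)^n x_1\cdots x_n$ of \refeq{logder} is invariant under the cyclic shift $x_k\to x_{k+1}$, and hence assigns a single value $\lambda$ to each orbit. Every $n$-cycle present at a parameter $R$ therefore furnishes exactly one value of $\lambda$, and $P_n\RX$—by definition the minimal polynomial vanishing whenever $(R,\lambda)$ is realized by an $n$-cycle—has, for fixed $R$, one root in $\lambda$ per $n$-cycle. For generic $R$ these multipliers are pairwise distinct, so the roots do not collide and $\deg_\lambda P_n\RX$ equals the number of $n$-cycles. (As a cross-check, the same argument applied to all cycles of period dividing $n$ reproduces $\deg_\lambda A_n\RX=N(n)$ of \refthm{necklace}, since there are $N(n)=\sum_{d|n}L(d)$ such cycles.)

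For the second claim I would count periodic points of the map $f(x)=R-x^2$ of \refeq{logmaps}. As $f$ has degree $2$, its $n$th iterate $f^n$ has degree $2^n$, so $f^n(x)-x$ is a polynomial of degree $2^n$ in $x$ with, for generic $R$, exactly $2^n$ distinct roots—precisely the points whose period divides $n$. Letting $\mathrm{Per}(d)$ denote the number of points of exact period $d$, every such root has a unique exact period $d\mid n$, whence $\sum_{d|n}\mathrm{Per}(d)=2^n$. Möbius inversion (the identity recalled in Remark 1 following \refthm{primfac}) then gives $\mathrm{Per}(n)=\sum_{d|n}\mu(n/d)\,2^{\,d}$. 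Since each $n$-cycle comprises exactly $n$ points of exact period $n$, the number of $n$-cycles is $\mathrm{Per}(n)/n=L(n)$, which combined with the first claim proves the theorem.

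I expect the genericity bookkeeping to be the main obstacle, since it is what makes both degree counts exact. I would need to verify that for generic $R$ the $2^n$ roots of $f^n(x)-x$ are simple—multiple roots arise only at the special parameters (tangent and bifurcation loci) where a cycle is created or changes stability—and that distinct $n$-cycles carry distinct multipliers $\lambda$, so that $\deg_\lambda P_n\RX$ is not accidentally depressed below the true cycle count. Because the $\lambda$-degree of $P_n\RX$ is read off from its generic fibre over $R$, confining these degeneracies to a proper (Zariski-closed) subset of $R$ is exactly what pins the degree to $L(n)$.
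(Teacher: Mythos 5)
Your proposal follows essentially the same route as the paper: count the $2^n$ generically distinct fixed points of $f^n$, partition them by exact period and M\"obius-invert $2^n=\sum_{d|n}L(d)\,d$ to get $L(n)$, then identify $\deg_\lambda P_n\RX$ with the number of $n$-cycles by assigning one multiplier value per orbit, assuming the multipliers are generically distinct (an assumption the paper also makes, handling the degenerate case by perturbing $\Lambda$ with a separating cyclic polynomial). The one item you flag as the main obstacle—generic simplicity of the roots of $f^n(x)-x$—is dispatched in the paper by a one-line specialization: a repeated root occurring identically in $R$ would persist at $R=0$, where $-x^{2^n}-x$ visibly has no repeated root.
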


\begin{proof}
Except a few special values of $R$,
the iterated map $f^{n}$ generally
has $2^n$ distinct complex fixed points,
for otherwise $f^{n}(x) - x$ would have a repeated zero at any $R$,
  but at $R=0$, $-x^{2^n} - x$ has no repeated root; a contradiction.

Each fixed point can be assigned to
  a point in a $d$-cycle, with $d$ being a divisor of $n$.
The assignment
  is both complete (for a $d$-cycle point
  must also be a fixed point of $f^n$)
  and non-redundant (for there is no
  repeated fixed point of $f^n$).
Since each of the $L(d)$ $d$-cycles
  contributes $d$ fixed points,
we have
$2^n = \sum_{d\,|\,n} L(d) \, d$.
The M\"obius inversion yields $L(n) = (1/n) \sum_{d\,|\,n} \mu(n/d) 2^{\,d}$.
This formula was known to several authors \cite{hao, lutzky}.

We now define $\lambda_c$ as the value of $\Lambda(\vx)$,
  evaluated at the cycle points
  $\vx^{(c)} \equiv \{ {x_1}^{(c)}, \ldots, {x_n}^{(c)} \}$
  of cycle $c$.
Of course, $\lambda_c$ is a function of $R$.
If we assume that $\lambda_c$ are distinct (see Remark 1 below),
  then the minimal polynomial $P_n\RX$,
  as a polynomial of $\lambda$,
  takes the form of $\prod_c (\lambda - \lambda_c)$.
Thus, the degree of $P_n\RX$ in $\lambda$ 
  must be the same as the number of the $n$-cycles.
\end{proof}

\begin{remark}[1]
Although all trajectory points $\vx^{(c)}$ are distinct in 
  different cycles,
the value $\lambda_c$ of the polynomial $\Lambda(\vx^{(c)})$ 
  may happen to be the same.
In this case,
  we shall find another cyclic polynomial $Y(\vx)$ that
  has different values in the two cycles
  [$Y(\vx)$ exists for otherwise $\vx$ are the same in the two cycles],
  use $\Lambda'(\vx) = \Lambda(\vx) + \epsilon Y(\vx)$
  to list \refeqs{xcp0},
  then take the limit
  $\epsilon \rightarrow 0$.
\end{remark}

\begin{remark}[2]
$L(n)$ is also the number of aperiodic (i.e., period equal to $n$)
  necklaces of length $n$,
  e.g., for $n = 4$, 
  out of the six necklaces, $0001$, $0011$, and $0111$ are aperiodic;
  but $0000$, $1111$, and $0101$ are periodic.
Since a period-$d$ ($d|n$) necklace is just an aperiodic one
  of length-$d$ repeated $n/d$ times,
and each contributes $d$ binary strings,
we can count the $2^n$ binary strings of length $n$ as
  $2^n = \sum_{d|n} L(d) \, d$.
The M\"obius inversion leads to the same result of \refeq{lyndon}.
\end{remark}

$L(n)$ =
2, 1, 2, 3, 6, 9, 18, 30, 56, 99, 186, 335, 630, \ldots, 
starting from $n = 1$.

Since the period $d$ of a length-$n$ necklace always divides $n$,
we have
\begin{equation}
N(n) = \sum_{d|n} L(d).
\label{eq:necklacelyndon}
\end{equation}
We can also show this by explicit computation:
\begin{align*}
\sum_{d | n} L(d)
  &= \sum_{d | n} (1/d) \sum_{c |  d} \mu(d/c) 2^{\,c} 
   = (1/n) \sum_{c | n} 2^{\,c} \sum_{d'  |  (n/c)} 
        \mu\left(\frac{n/c}{d'}\right) d' \\
  &= (1/n) \sum_{c \, | \, n} 2^{\,c} \, \phi(n/c)
  = N(n),
\end{align*}
where we have used $\phi(m) = \sum_{d' | m} \mu(\frac{m}{d'}) \, d'$,
which is the inversion of $m = \sum_{d' | m} \phi(d')$,
and \refeq{necklace}.
We will use \refeq{necklacelyndon} in proving the next theorem.

\subsubsection{Relation between $P_n\RX$ and $A_n\RX$}

\begin{theorem}
The minimal polynomials $P_d\RX$ of all $d$-cycles
  of periods $d|n$
and $A_n\RX$ [defined in \refeq{xr}] are related by
  \begin{equation}
    A_n\RX = \prod_{cd = n} Q_{d, c}\RX,
    \label{eq:prod}
  \end{equation}
  where
  $Q_{d,c}\RX = \prod_{k=1}^{c} P_d(R, e^{2 k \pi i/c} \lambda^{1/c})$
  is a polynomial of degree $L(d)$ in $\lambda$
  representing contributions from $d$-cycles.
\label{thm:prod}
\end{theorem}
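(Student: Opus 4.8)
The plan is to identify $A_n\RX$ as the characteristic polynomial of the ``multiplication by $\Lambda(\vx)$'' operator and to read off its $\lambda$-roots as the multipliers of the individual cycles. By \refeq{xcp0}, in the basis $\{C_p(\vx)\}_{p\in\Bset}$ this operator is the matrix $\vct T(R)$, so by \refeq{xr} $A_n\RX=|\lambda\vct I-\vct T(R)|$ is precisely its characteristic polynomial, of degree $N(n)$ in $\lambda$. Fix a generic $R$. By the argument in the proof of \refthm{lyndon}, $f^n$ has $2^n$ distinct fixed points, which decompose into $\sum_{d\mid n}L(d)=N(n)$ cycles [\refeq{necklacelyndon}] — exactly the size of $\vct T(R)$. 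For a cycle $\gamma$ with orbit $\vx^{(\gamma)}$, the cyclic relations $x_{k+1}=R-x_k^{2}$ hold, so evaluating the polynomial identity \refeq{xcp0} at $\vx^{(\gamma)}$ gives $\vct T(R)\,\vct C(\vx^{(\gamma)})=\Lambda(\vx^{(\gamma)})\,\vct C(\vx^{(\gamma)})$; the vector $\vct C(\vx^{(\gamma)})=\big(C_p(\vx^{(\gamma)})\big)_{p\in\Bset}$ is nonzero (its $C_0$ entry is $1$) and, since the $C_p$ are cyclic, depends only on $\gamma$.

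First I would show that these $N(n)$ eigenvectors are linearly independent for generic $R$ — equivalently, that the $N(n)\times N(n)$ matrix $\big[C_p(\vx^{(\gamma)})\big]_{p,\gamma}$ is nonsingular, i.e.\ the square-free cyclic polynomials separate cycles. This makes $\vct T(R)$ diagonalizable with these exact eigenvalues, so
\[
  A_n\RX=\prod_{\gamma}\big(\lambda-\Lambda(\vx^{(\gamma)})\big),
\]
the product over all cycles among the fixed points of $f^n$; as both sides are polynomials in $R$, an identity for generic $R$ holds identically. Next I would group the cycles by period $d\mid n$. A $d$-cycle regarded as a fixed point of $f^n=(f^d)^{n/d}$ has, by the chain rule, multiplier $\Lambda(\vx^{(\gamma)})=\lambda_j^{\,n/d}$, where $\lambda_j$ is its $d$-cycle multiplier, i.e.\ a root of $P_d(R,\cdot)$ by \refthm{lyndon}. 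Collecting the $L(d)$ cycles of each period,
\[
  A_n\RX=\prod_{d\mid n}\ \prod_{j=1}^{L(d)}\big(\lambda-\lambda_j^{\,n/d}\big).
\]

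It remains to recognize the inner product as $Q_{d,c}\RX$ with $c=n/d$. Writing $P_d(R,\mu)=\prod_{j}(\mu-\lambda_j)$ and $\omega=e^{2\pi i/c}$, I would apply the elementary identity $\prod_{k=1}^{c}(\omega^{k}a-b)=(-1)^{c}(b^{c}-a^{c})$ with $a=\lambda^{1/c}$ (so $a^{c}=\lambda$) and $b=\lambda_j$, giving
\[
  Q_{d,c}\RX=\prod_{k=1}^{c}P_d\big(R,e^{2k\pi i/c}\lambda^{1/c}\big)=(\pm1)\prod_{j}\big(\lambda-\lambda_j^{\,c}\big).
\]
This simultaneously shows that $Q_{d,c}\RX$ is an honest polynomial in $\lambda$ (the product over the $c$ branches of $\lambda^{1/c}$ is symmetric, hence radical-free) of degree $L(d)$, and that $\sum_{cd=n}L(d)=N(n)=\deg_\lambda A_n$, so no factor is overlooked. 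The sign factors are absorbed into the standing normalization of the $P_d$ (under which $A_n$ is monic, cf.\ \reftab{Pnlog}), yielding $A_n\RX=\prod_{cd=n}Q_{d,c}\RX$.

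The step I expect to be the main obstacle is the independence (diagonalizability) claim: that the cycle-indexed vectors $\vct C(\vx^{(\gamma)})$ are linearly independent, equivalently that the $C_p$ separate the cycles. This is the same genericity used in \refthm{lyndon} (Remark 1): at special $R$ two cycles may share a multiplier and the eigenvalues collide, and there one replaces $\Lambda(\vx)$ by $\Lambda(\vx)+\epsilon Y(\vx)$ with a cyclic polynomial $Y$ distinguishing the coincident cycles and lets $\epsilon\to0$. Everything else is the root-of-unity bookkeeping, which is routine.
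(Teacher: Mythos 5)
Your proposal is correct and takes essentially the same route as the paper: both proofs rest on the observations that every $d$-cycle ($d|n$) multiplier raised to the power $c=n/d$ is a root in $\lambda$ of $A_n\RX$, that $Q_{d,c}\RX$ is a genuine polynomial of degree $L(d)$ in $\lambda$ collecting exactly those roots, that the degree count $N(n)=\sum_{d|n}L(d)$ of \refeq{necklacelyndon} leaves no room for extra factors, and that a leading-coefficient normalization fixes the remaining constant, all under the same genericity assumption of distinct multipliers handled by the perturbation in Remark 1 under \refthm{lyndon}. The only differences are presentational---you make the root-extraction explicit as an eigenvector/characteristic-polynomial statement and compute $Q_{d,c}\RX$ via the root-of-unity product identity instead of the paper's invariance-under-$\lambda\rightarrow e^{2\pi i}\lambda$ argument---and, incidentally, the linear-independence step you single out as the main obstacle is not actually needed: once the $N(n)$ values $\Lambda(\vx^{(\gamma)})$ are distinct, they exhaust the roots of the monic degree-$N(n)$ polynomial $A_n\RX$, forcing $A_n\RX=\prod_\gamma\big(\lambda-\Lambda(\vx^{(\gamma)})\big)$ with no appeal to independence or diagonalizability.
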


\begin{proof}
Since $A_n\RX$ represent all $d$-cycles with $d|n$,
and each cycle holds a distinct $\lambda$,
  $\deg_\lambda A_n\RX$
  is at least $\sum_{d|n} L(d)$
  according to \refeq{lyndon},
  which is equal to
  $N(n)$ by \refeq{necklacelyndon}.
Since $\deg_\lambda A_n\RX = N(n)$ by \refeq{necklace},
  each $n$-cycle occurs exactly once in $A_n\RX$.

In a $d$-cycle,
 we have $\Lambda(\{x_1, \ldots, x_n\}) = (-2)^n x_1 \dots x_n
   = {\Lambda_d}^c$,
where
$\Lambda_d(\{x_1, \ldots, x_d\}) \equiv (-2)^d x_1 \dots x_d$
and
$c \equiv n/d$.
So the $d$-cycle satisfies a polynomial
  $P_d(R, \lambda^{1/c}) = 0$,
where $\lambda = \Lambda(\{x_1,\ldots,x_n\})$.
This is, however, not a polynomial equation, and
the radical $\lambda^{1/c}$ can be removed by the product 
$Q_{d,c}\RX 
  \equiv \prod_{k=1}^c P_d(R, e^{2k\pi i/c} \, \lambda^{1/c}) = 0$.
Now $Q_{d,c}\RX$ is a polynomial of $\lambda$
  for it is invariant under
  $\lambda \rightarrow e^{2\pi i} \lambda$,
and thus free from radicals of the form $\lambda^{l/c}$ 
  \big[if $(l, c) \ne c$\big].
And since 
  $\deg_\lambda Q_{d,c}\RX 
    = \deg_\lambda P_d\RX = L(d)$,
  it is also a polynomial 
  of the lowest possible degree in $\lambda$.

Therefore, the product $\prod_{cd=n} Q_{d,c}\RX$
  can differ from $A_n\RX$ only by
  a multiple.
Since $Q_{n,1}\RX = P_n\RX$,
  and the coefficient of highest power of $\lambda$ is always
  unity in $A_n\RX$ \big[see the definition \refeq{xr}\big],
  we know by induction that the coefficients of the highest power of $\lambda$
  in all $P_n\RX$ and $Q_{d,c}\RX$
  are also unities.
So the multiple is one, hence \refeq{prod}.
\end{proof}

We can now prove \refthm{primfac} as a corollary of \refthm{prod},
\begin{align*}
 B_{n, m} \RX
 & = \prod_{l = 1}^{m} A_n(R, e^{2l\pi i/m} \lambda^{1/m}) \\
 & = \prod_{c | n} \prod_{l=1}^m \prod_{k=1}^c P_{n/c}(R, e^{2k\pi i/c + 2l\pi i/(m c)} \lambda^{1/(m c)}) \\
 & = \prod_{c | n} \prod_{k'=1}^{m c} P_{n/c}(R, e^{2k'\pi i/(m c)} \lambda^{1/(m c)})
  = \prod_{c | n} Q_{n /c, m c}\RX.
\end{align*}
Taking the logarithm (formally) yields
  $\log B_{n, m} = \sum_{d|n} \log Q_{d, m n/d}$,
where $d = n/c$.
The inversion is
  $\log Q_{n, m} = \sum_{d|n} \mu(n/d) \log B_{d, m n/d}$,
or
\begin{equation}
  Q_{n, m}\RX
= \prod_{d|n} B_{d, m n/d}\RX^{\mu(n/d)},
\tag{$\ref{eq:primfac}'$}
\end{equation}
which is reduced to \refeq{primfac} with $m = 1$
for $Q_{n, 1}\RX = P_n\RX$.

\subsection{\label{sec:origfac}Intersection of cycles
  and further factorization at the onset point}

\reftab{Pnlog} shows that
the onset polynomial 
  $P_n(R, \lambda = 1)$ for the $n$-cycles
  can be further factorized.
This is because the intersection of an $n$-cycle
  and a shorter $d$-cycle ($d|n$, $d < n$)
  forces the two to share orbits 
  (this cannot happen if $d \centernot| n$, 
  for the orbits would be out of phase).
Consequently, upon the intersection,
  $P_n\RX$ from the $n$-cycle
  has to accommodate $P_d(R, \lambda')$ from the $d$-cycle,
  with $\lambda'$ being a primitive $(n/d)\,$th root of $\lambda$.

At the intersection,
  the shorter $d$-cycle is branched or ``bifurcated''
  by $(n/d)$-fold to the $n$-cycle.
The simplest example is the first bifurcation point 
  at $R = 3/4$ for $d = 1, n = 2$,
  where the fixed point \refeq{xr1} 
  bifurcates to 
  the 2-cycle \refeq{xr2}.
The second bifurcation point at $R=5/4$ for $d = 2, n = 4$ is similar,
  cf. \refsec{examples}.

We will show below that such branching generally can only happen 
  at the onset of the $n$-cycle, where $\lambda = 1$.
Further, with a real $R$, only a two-fold branching is possible,
  but a complex $R$ allows higher-fold branchings.

If an $n$-cycle is not born out of the above branching,
  we call it an \emph{original} cycle,
  e.g., the 4-cycle at $R_a' = 5/4$
  is born out of bifurcation, see \reffig{cobweb}(d),
  but the other at $R_a = (3+\sqrt[3]{108})/4$
  is original, see \reffig{cobweb}(e),
  also the discussion after \refeq{xr4}.
Both types of cycles exist in $P_n(R, \lambda = +1)$,
  as separate factors;
and the factor responsible for the original cycles, 
  or the \emph{original factor} below, 
  is given by the following formula.

\begin{theorem}
  The original factor $S_n(R)$ of $P_n\RX$ at the onset is given by
  \begin{equation}
    S_n(R)
    = \frac
    {
      P_n(R, 1)
    }
    {
      \prod_{c d =  n, \; c > 1}
      \left[ \,
        \prod_{(k, c) = 1}
      P_{d}
        \left(
          R, e^{2k\pi i/c}
        \right)
      \right]
    },
  \label{eq:origfac}
  \end{equation}
  where the inner product on the denominator is carried over
   $k$ from 1 to $c$ that are coprime to $c$.
  \label{thm:origfac}
\end{theorem}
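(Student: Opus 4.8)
The plan is to reduce \refthm{origfac} to a single polynomial identity in $R$ and then prove that identity by matching zeros, multiplicities, and an overall constant. By \refthm{lyndon}, $P_n\RX = \prod_\gamma(\lambda - \lambda_\gamma)$ factors over the $n$-cycles $\gamma$, so at the onset
\[
  P_n(R, 1) = \prod_\gamma \big(1 - \lambda_\gamma(R)\big).
\]
Splitting the $n$-cycles into original and non-original ones factors this as $P_n(R,1) = S_n(R)\, D_n(R)$, where $S_n(R)$ collects the original cycles (its defining property) and $D_n(R) \equiv \prod_{\gamma \text{ branched}} (1 - \lambda_\gamma(R))$ collects the rest. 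Writing $G_{d,c}(R) \equiv \prod_{(k,c)=1} P_d(R, e^{2k\pi i/c})$ for the inner product in the statement, the theorem is equivalent to
\[
  D_n(R) = \prod_{cd = n,\; c > 1} G_{d,c}(R),
\]
after which \refeq{origfac} follows by dividing $P_n(R,1) = S_n D_n$ by this product.

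First I would identify the zeros of both sides. By the branching mechanism established just above the theorem, a non-original $n$-cycle is born from a shorter $d$-cycle ($cd = n$, $c > 1$) precisely where the parent multiplier equals a primitive $c$th root of unity, and at that locus the branched cycle coincides with the parent and inherits $\Lambda = \Lambda_d^{\,c} = 1$, so it sits at its own onset. Using \refthm{lyndon} for the parent, $P_d(R, \lambda) = \prod_\delta(\lambda - \Lambda_\delta(R))$, the condition ``some $d$-cycle has multiplier $e^{2k\pi i/c}$ with $(k,c)=1$'' is exactly $G_{d,c}(R) = 0$. Hence every zero of $\prod_{c>1} G_{d,c}$ is an onset of a branched $n$-cycle and thus a zero of $D_n$; and conversely, by the classification of non-original cycles just established, every branched onset arises this way, so the two sides share their zero sets.

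Next I would upgrade this to a polynomial identity. Grouping the branched cycles by their birth type $(d, c)$ and setting $H_{d,c}(R) \equiv \prod_{\gamma \text{ of type } (d,c)}(1 - \lambda_\gamma(R))$, I would prove the refined, per-type statement $H_{d,c}(R) = G_{d,c}(R)$ for each $c > 1$, from which $D_n = \prod_{c>1} G_{d,c}$ follows by multiplying over types. Having matched zeros, the two sides agree up to a multiplicative constant, which I would pin down by comparing the top power of $R$ (equivalently, by demanding that $P_n(R,1)/\prod_{c>1}G_{d,c}$ be the polynomial $S_n$). The remaining content is that multiplicities agree at every (possibly complex) branching value of $R$: at a transversal branching the parent multiplier crosses $e^{2k\pi i/c}$ simply, giving a simple zero on the $G_{d,c}$ side, which I would match against the simple onset zero of the corresponding branched cycle on the $H_{d,c}$ side, treating accidental coincidences by the perturbation $\Lambda \to \Lambda + \epsilon Y$ used in the remark to \refthm{lyndon}. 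As a consistency check I would verify the degree balance $\deg_R G_{d,c} = \phi(c)\, \deg_R P_d(R,\cdot)$ against $\deg_R H_{d,c} = \sum_{\gamma}\deg_R \lambda_\gamma$.

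The hard part will be the per-type identity $H_{d,c} = G_{d,c}$ itself, because the branched cycle's multiplier $\lambda_\gamma(R)$ and the parent's $\Lambda_\delta(R)$ agree only on the branching locus (where $\lambda_\gamma = \Lambda_\delta^{\,c} = 1$) and differ as functions of $R$ elsewhere, so the equality of the two products is \emph{not} a termwise matching of factors. Establishing it rigorously requires either a careful transversality and multiplicity analysis of the branching---showing the order of vanishing of $\prod_{\text{type }(d,c)}(1 - \lambda_\gamma)$ equals that of $G_{d,c}$ at each branching value of $R$---or an algebraic derivation that specializes the factorization $A_n = \prod_{cd=n} Q_{d,c}$ of \refthm{prod} at $\lambda = 1$ and isolates the primitive-$c$th-root contribution by a M\"obius inversion over the divisor lattice of $c$, then identifies that contribution with the type-$(d,c)$ branched cycles. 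Confirming that this isolation yields exactly $G_{d,c}$ with no stray factors, so that $P_n(R,1)/\prod_{c>1} G_{d,c}$ is a genuine polynomial whose zeros are precisely the original onsets, is the crux.
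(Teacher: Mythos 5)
Your overall framework --- factoring $P_n(R,1)=\prod_\gamma\big(1-\lambda_\gamma(R)\big)$ via \refthm{lyndon}, splitting the $n$-cycles into original and branched ones, and showing the branched part equals the denominator of \refeq{origfac} --- is a sensible reduction, but it leaves unproven exactly the fact that the paper's proof actually supplies, so there is a genuine gap. You take as given (``the branching mechanism established just above the theorem'') that a non-original $n$-cycle merges with its parent $d$-cycle precisely where the parent's multiplier is a primitive $c$th root of unity ($c=n/d$), and that the merged $n$-cycle then sits at its onset. In the paper this claim is only \emph{announced} before the theorem (``we will show below\dots'') and is proven afterwards as \refthm{cbifur}. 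That proof is not a transversality or perturbation argument of the kind you sketch; it is a short algebraic computation special to the map: from \refeq{logmaps} one gets $x_{l+1}-x_{d+l+1}=-(x_l+x_{d+l})(x_l-x_{d+l})$, telescoping this and summing over offsets yields the identity \refeq{stair}, and at an intersection of the $n$- and $d$-cycles (where $x_{d+l}=x_l$) that identity collapses to $1+q+\dots+q^{c-1}=0$ with $q=(-2)^d x_1\cdots x_d$. Multiplying by $q-1$ gives $q^c=\Lambda=1$, so the $n$-cycle is at onset, and a short contradiction argument shows $q$ is a \emph{primitive} $c$th root of unity. Without this input, your statements about the zero sets of $D_n$ and of $\prod_{c>1}G_{d,c}$ are assumptions, not deductions.

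Moreover, you candidly flag the per-type identity $H_{d,c}=G_{d,c}$ as ``the crux'' and offer only two candidate strategies rather than carrying one out, and the second of these (specializing $A_n=\prod_{cd=n}Q_{d,c}$ from \refthm{prod} at $\lambda=1$ and M\"obius-inverting over divisors of $c$) cannot work on its own: \refthm{prod} only relates $A_n$ to the polynomials $P_d$ of the various periods, whereas \refeq{origfac} asserts a divisibility property \emph{internal} to the single factor $P_n(R,1)=Q_{n,1}(R,1)$, namely that the parameter values where a shorter $d$-cycle has a primitive $c$th-root-of-unity multiplier are roots of $P_n(R,1)$. Establishing that requires knowing that such a parameter value really is the site of a merging $n$-cycle at onset --- which is again \refthm{cbifur}. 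So your proposal correctly organizes the bookkeeping (zeros, multiplicities, leading constants), and that bookkeeping is indeed what converts \refthm{cbifur} into \refeq{origfac}, but it is missing the one idea the paper's proof rests on: the staircase identity \refeq{stair} derived from the explicit algebraic form of the logistic map.
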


We illustrate \refthm{origfac} through a few examples
  before giving a proof.
For $n = 1$,
$S_1(R) = P_1(R, 1) = -4R - 1$
as the denominator is unity.

For $n = 2$,
$P_2(R, 1) = 4R - 3$.
But $P_1(R, -1) = -4R + 3$.
So $S_2(R) = P_2(R, 1)/P_1(R, -1) = -1$.
This means that there is no original 2-cycle
  and the only 2-cycle comes from period doubling.

For $n = 3$,
$P_3(R, 1) = -(4R - 7)(16 R^2 - 4 R + 7)$,
whose second factor is equal to
$
\big(-4R + \frac{1 - 3 \sqrt3 i}{2}\big)
\big(-4R + \frac{1 + 3 \sqrt3 i}{2}\big)
= \prod_{k=1,2} P_1(R, e^{2 k \pi i/3})$.
Thus $S_3(R) = -4R + 7$.

For $n = 4$,
$P_4(R, 1)
= (4R - 5) (16 R^2 + 8 R + 5)
  \bigl[
    (4R - 3)^3 - 108
  \bigr]$.
But
$P_2(R, -1) = 4R - 5$ (for $c = 2$)
and
$\prod_{k=1,3} P_1(R, e^{k\pi i/2})
=(-4R-2i-1)(-4R+2i-1)
=16R^2+8R+5$ 
(for $c = 4$).
Dividing $P_4(R, 1)$ by the two factors yields
$S_4(R) = (4R-3)^3 - 108$,
  whose only real root $R=(3+\sqrt[3]{108})/4$ corresponds to
  the onset of the original cycle.
Note $R = 5/4$ is excluded from $S_4(R)$
  as it comes from period-doubling the 2-cycle.


We now prove \refthm{origfac}.
Suppose $n = c \,d$, we have, from \refeq{logmaps},
\[
  x_{l+1} - x_{d+l+1} = - (x_l + x_{d+l}) (x_l - x_{d+l}).
\]
We apply the equation to $l = 1, \ldots, m$, and the product is
\[
  x_{m+1} - x_{m + d+1} =
  (-1)^{m} \left[ \, \prod_{l=1}^{m} (x_l + x_{d+l}) \right]
    (x_1 - x_{d+1}).
\]
We now set $m$ to $0, d, \ldots, (c-1)\,d$ in this equation,
  add them together, eliminate $x_1 - x_{d+1}$
  (which is nonzero in a cycle), and
\begin{equation}
  \sum_{c' = 0}^{c-1}
    (-1)^{c' d} \prod_{l=1}^{c' d} (x_l + x_{d+l})
   = 0.
\label{eq:stair}
\end{equation}
Note \refeq{stair} holds for every divisor $c$ of $n$ ($c > 1$). We now have

\begin{theorem}
An $n$-cycle and a shorter $d$-cycle ($d|n$, $d< n$)
  intersect only at the onset of the $n$-cycle,
and $\prod_{k=1}^{d} f'(x_k) = (-2)^d x_1 \dots x_d$ is
a \emph{primitive} $(n/d)$th root of unity there.
\label{thm:cbifur}
\end{theorem}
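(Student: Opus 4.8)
The plan is to extract both claims from the staircase identity \refeq{stair}, which holds along any genuine $n$-cycle, supplemented by a first-order analysis of how the $n$-cycle splits off from the $d$-cycle. Write $c = n/d$ and let $\mu \equiv \prod_{k=1}^{d} f'(x_k) = (-2)^d\, x_1 \cdots x_d$ denote the multiplier of the $d$-cycle, so that by \refeq{logder} the $n$-cycle multiplier equals $\Lambda(\vx) = \mu^{c}$ whenever the $n$-cycle coincides with the $d$-cycle traversed $c$ times. The intersection is the limiting configuration in which the two cycles share orbits ($x_{l+d} = x_l$), and I would study \refeq{stair} in that limit, reached continuously along the $n$-cycle branch.

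First I would evaluate \refeq{stair} at the intersection, where $x_l + x_{d+l} = 2x_l$. Because the shared orbit is $d$-periodic, $\prod_{l=1}^{c'd} x_l = \big(\prod_{l=1}^{d} x_l\big)^{c'}$, and the $c'$th term of \refeq{stair} collapses:
\[
(-1)^{c'd}\prod_{l=1}^{c'd}(x_l + x_{d+l}) = \big((-2)^d\, x_1\cdots x_d\big)^{c'} = \mu^{c'}.
\]
Hence \refeq{stair} reads $\sum_{c'=0}^{c-1}\mu^{c'} = 0$; multiplying by $(\mu-1)$ gives $\mu^{c} = 1$, while the vanishing of the sum itself forces $\mu \neq 1$. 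Since $\Lambda(\vx) = \mu^{c}$ when the two cycles coincide, we obtain $\Lambda(\vx) = +1$ at the intersection, which by \refeq{der} is exactly the onset; and because this value is forced, an intersection can occur at no other locus. This proves the first claim.

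To upgrade ``$c$th root of unity'' to ``primitive,'' I would track the leading-order splitting of the $n$-cycle from the $d$-periodic background. Set $u_l \equiv x_{l+d} - x_l$, small near the intersection. From \refeq{logmaps} one has $u_{l+1} = -(x_l + x_{d+l})\,u_l$, which to first order (as $x_{l+d} \to x_l$) reads $u_{l+1} = f'(x_l)\,u_l$; iterating over one $d$-block gives $u_{l+d} = \mu\,u_l$ and hence $u_{l+kd} = \mu^{k} u_l$. Telescoping, $x_{l+kd} - x_l = \sum_{j=0}^{k-1} u_{l+jd} = u_l\,(\mu^{k}-1)/(\mu-1)$, so to leading order the orbit first closes after $kd$ steps exactly when $\mu^{k} = 1$. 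Since the limiting orbit is the $d$-cycle (exact period $d$), any period of the perturbed orbit is a multiple of $d$, so the least period is $kd$ with $k = \mathrm{ord}(\mu)$. A genuine $n$-cycle has least period $n = cd$, forcing $\mathrm{ord}(\mu) = c$, i.e.\ $\mu$ is a primitive $c$th root of unity.

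The main obstacle is the primitivity step, not the root-of-unity step. Equation~\refeq{stair} alone is too weak: a non-primitive $c$th root (order $c'' < c$, $c'' \mid c$) also satisfies $\sum_{c'=0}^{c-1}\mu^{c'} = 0$, so primitivity cannot come from \refeq{stair} and must be extracted from the requirement that the branch consist of genuine period-$n$ orbits. The delicate point is the passage from ``closes to leading order'' to ``closes exactly'': one must argue that if $\mathrm{ord}(\mu) = c'' < c$ then the branch meeting the $d$-cycle is really the $c''d$-cycle variety (the $n$-cycle being that shorter cycle merely traversed $c/c''$ times), so that no genuine $n$-cycle passes through the intersection -- equivalently, the splitting vector $u_l$ must have index-period exactly $n$, which is the primitivity condition. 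Once this is secured, the advertised corollaries follow at once: for real $R$ the multiplier $\mu$ is real, so the only admissible primitive root is $\mu = -1$ with $c = 2$ (period doubling), whereas complex $R$ permits primitive $c$th roots for every $c$, allowing higher-fold branchings.
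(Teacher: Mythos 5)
Your first claim (intersection forces onset) is handled exactly as the paper handles it: evaluate \refeq{stair} at the intersection, collapse each term to a power of the multiplier, and multiply the resulting geometric sum by (multiplier $-\,1$) to get a $c$th root of unity and hence $\Lambda(\vx)=1$. That part is correct and complete.

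The primitivity step, however, has a genuine gap --- one you yourself flag. Your perturbation argument shows only that the orbit closes \emph{to leading order} after $\mathrm{ord}(\mu)\,d$ steps: if $\mathrm{ord}(\mu)=c''<c$, the quantity $x_{l+c''d}-x_l$ vanishes at first order in $u_l$ but may well be nonzero at second order, so a genuine $n$-cycle passing through the intersection is not excluded. Converting ``closes to leading order'' into ``the branch is really the $c''d$-cycle variety'' would require an analysis of the local structure of the fixed-point variety of $f^n$ (an implicit-function-theorem or normal-form argument), which you do not supply; as written this step is a plausibility sketch, not a proof. The paper closes the gap with no perturbation theory at all, purely algebraically, by exploiting the fact that \refeq{stair} holds for \emph{every} divisor pair, not just $(d,c)$: writing $q$ for the multiplier, if $q=e^{2k\pi i/c}$ with $g=(k,c)>1$ and $c_1=c/g$, regroup the same $n$-cycle as $g$ blocks of length $dc_1$. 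The intersection points, being $d$-periodic, are also $dc_1$-periodic, so \refeq{stair} applied with $c\rightarrow g$, $d\rightarrow dc_1$ collapses to $1+q_1+\dots+q_1^{g-1}=0$ with $q_1=q^{c_1}=1$, i.e., $g=0$ --- a contradiction. Note that this repair is available inside your own framework: your exact recursion $u_{l+1}=-(x_l+x_{d+l})\,u_l$ is precisely the identity from which the paper derives \refeq{stair}, so the missing ingredient is not new machinery but the observation that the staircase identity can be reapplied with the coarser grouping of the cycle.
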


\begin{proof}
At the intersection of the $n$- and $d$-cycles, 
  $x_l$ repeats itself after $d$ steps, so
  $x_{d + l} = x_l$;
and \refeq{stair} becomes,
\begin{equation}
  1 + q + \dots + q^{c-1} = 0,
\label{eq:cbifur}
\end{equation}
where $q = (-2)^d \, x_1 \dots x_d$.
Multiplying \refeq{cbifur} by $q - 1$ yields
  $1 = q^c = (-2)^n \, x_1 \dots x_n$.
So the $n$-cycle is at its onset.

Further $q$ is a \emph{primitive} $c$th root of unity.
Suppose the contrary: $q = e^{2k\pi i/c}$ and $(k, c) = g >1$, 
  then by $c_1 \equiv c/g$, $k_1 \equiv k/g$, we have
\begin{equation}
  q = e^{2k_1\pi i/ c_1}.
\label{eq:qd}
\end{equation}
Similar to \refeq{cbifur}, we can apply \refeq{stair} with $c \rightarrow g$ and $d \rightarrow d c_1$, and
\[
  1 + q_1 + \dots + {q_1}^{g-1} = 0,
\]
  where $q_1 = (-2)^{d c_1} \, x_1 \dots x_{dc_1} = q^{c_1}$.
But by \refeq{qd}, $q^{c_1} = e^{2 k_1\pi i} = 1$, 
and $1 + q_1 + \dots + {q_1}^{g-1} = g > 0$; a contradiction.
\end{proof}

\begin{remark}
The only real $q$ is $q = -1$ for $c=2$, i.e., a period-doubling.
On the complex domain, however, we can have a $c$-fold branching with $c>2$,
  which corresponds to a contact points between ``bulbs'' 
  in the Mandelbrot set, see \reffig{logbulb}.
\end{remark}

By \refthm{cbifur}, $P_n\RX$ at the onset point
includes $P_d(R, e^{2 k \pi i/c} \lambda^{1/c})$ for every 
  possible combination of $k$ and $c$, such that 
  $(k, c) = 1$, $c|n$, and $c > 1$.
Dividing the factors from $P_n\RX$ yields \refthm{origfac}.

\subsection{\label{sec:degR}Degrees in $R$}

\begin{theorem}
The degrees in $R$ of $A_n\RX$, $P_n\RX$ and $S_n(R)$ are
\begin{subequations}
\begin{align}
\deg_R A_n\RX &= \sum_{d|n} \phi(n/d) 2^{d-1}, \\
\deg_R P_n\RX &= \sum_{d|n} \mu(n/d) 2^{d-1} \equiv \beta(n), \\
\deg S_n(R) &= \beta(n) - \sum_{d | n, d < n} \beta(d) \, \phi(n/d).
\end{align}
\label{eq:degR}
\end{subequations}
\label{thm:degR}
\end{theorem}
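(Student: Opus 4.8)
The plan is to prove the three degree formulas in the order (a), (b), (c). The heart of the matter is to compute $\deg_R A_n\RX$ \emph{exactly} by a weighted-degree argument; once this is in hand, (b) and (c) follow mechanically from the multiplicative structure already established in \refthm{prod} and \refthm{origfac}.

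\emph{The degree of $A_n$.} I would assign weights $\mathrm{wt}(x_k)=1$, $\mathrm{wt}(R)=2$, $\mathrm{wt}(\lambda)=n$. The reduction $x_k^2\to R-x_{k+1}$ of \refthm{sqrfree} never raises the weight (it trades a weight-$2$ term for $R$, also weight $2$, plus a lower weight-$1$ term), while $\Lambda\,C_p$ is weight-homogeneous of weight $n+|p|$, where $|p|$ is the number of indices in $p$. Hence in \refeq{xcp0} every coefficient obeys $2\deg_R T_{pq}\le n+|p|-|q|$. Writing $r_p=n+|p|$ and $s_q=|q|$, the entry of $\lambda\,\vct I-\vct T$ in row $p$, column $q$ has $R$-degree at most $(r_p-s_q)/2$ (the diagonal $\lambda$ fits, since $\mathrm{wt}(\lambda)=n=r_p-s_p$). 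Expanding the determinant \refeq{xr} over permutations $\sigma$ and using $\sum_p s_{\sigma(p)}=\sum_q s_q$, each term has $R$-degree at most $\tfrac12\sum_p(r_p-s_p)=\tfrac{n}{2}\NB(n)$, so $\deg_R A_n\le\tfrac{n}{2}\NB(n)$.

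\emph{Equality --- the crux.} The upper bound is free; the real work is showing the coefficient of $R^{n\NB(n)/2}$ does not vanish. The weight-$n\NB(n)$ part of $A_n$ is $\det(\lambda\,\vct I-\overline{\vct T})$, where $\overline T_{pq}$ is the top-weight part of $T_{pq}$. I would observe that $\overline{\vct T}$ is exactly the matrix, in the basis $\{C_p\}$, of multiplication by $\Lambda$ on the cyclic-invariant part $V=\mathcal Q^{\tau}$ of the quotient ring $\mathcal Q$ of the polynomial ring in $x_1,\dots,x_n$ over $R$ by the ideal $(x_1^2-R,\dots,x_n^2-R)$, with $\tau$ the cyclic shift --- because the top-weight reduction is precisely reduction modulo that ideal. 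After adjoining a square root of $R$, $\mathcal Q$ splits into $2^n$ lines indexed by $\epsilon\in\{\pm1\}^n$ (via $x_k\mapsto\epsilon_k\sqrt R$), on which $\Lambda=(-2)^n x_1\cdots x_n$ acts by the \emph{nonzero} scalar $(-2)^n(\prod_k\epsilon_k)R^{n/2}$. Taking $\tau$-invariants groups the lines into necklaces, so $V$ splits into $\NB(n)$ lines (\refthm{necklace}) and $\Lambda|_V$ is diagonal with these same eigenvalues. Thus $\det(\lambda\,\vct I-\overline{\vct T})=\prod_{\text{necklaces}}\big(\lambda-(-2)^n(\prod_k\epsilon_k)R^{n/2}\big)$, whose coefficient of $R^{n\NB(n)/2}$ is a nonzero constant (the $\lambda^0$ term). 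Hence $\deg_R A_n=\tfrac{n}{2}\NB(n)=\sum_{d|n}\phi(n/d)2^{d-1}$ by \refeq{necklace}, and the leading $R$-coefficient of $A_n$ is a nonzero constant.

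\emph{From $A_n$ to $P_n$ and $S_n$.} By \refthm{prod}, $A_n=\prod_{d|n}Q_{d,n/d}$, and since $Q_{d,n/d}$ is a product of $n/d$ factors each of $R$-degree $\deg_R P_d$, the degrees add: $\deg_R A_n=\sum_{d|n}(n/d)\deg_R P_d$. This triangular relation isolates $\deg_R P_n$; checking that $\beta(n)=\sum_{d|n}\mu(n/d)2^{d-1}$ satisfies $\sum_{d|n}(n/d)\beta(d)=\sum_{d|n}\phi(n/d)2^{d-1}$ (using $\sum_{e\mid m}\mu(m/e)\,e=\phi(m)$), the inversion gives $\deg_R P_n=\beta(n)$, which is (b). For (c): because the leading $R$-coefficient of $A_n$ is a nonzero constant and equals the product of those of the $Q_{d,n/d}$, each factor's --- in particular $Q_{n,1}=P_n$'s --- leading $R$-coefficient is a nonzero constant; so the coefficient of $R^{\beta(n)}$ in $P_n\RX$ is $\lambda$-independent, whence $\deg_R P_d(R,\zeta)=\beta(d)$ for every constant $\zeta$, including the roots of unity in \refeq{origfac}. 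Feeding this into \refeq{origfac} gives $\deg S_n=\beta(n)-\sum_{c\mid n,\,c>1}\phi(c)\,\beta(n/c)$, and reindexing $d=n/c$ yields $\deg S_n=\beta(n)-\sum_{d\mid n,\,d<n}\phi(n/d)\,\beta(d)$, which is (c). The main obstacle is the middle paragraph: the bounds are immediate from the grading, but the nonvanishing of the leading coefficient --- obtained by recognizing $\overline{\vct T}$ as multiplication by $\Lambda$ on $V$ and diagonalizing it over $\sqrt R$ --- is the step that carries the weight.
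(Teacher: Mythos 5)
Your proposal is correct, and it follows the paper's overall skeleton --- bound $\deg_R T_{pq}$ by $(n+|p|-|q|)/2$, bound the determinant by $n\NB(n)/2$, then obtain \reqsub{degR}{b} and \reqsub{degR}{c} by taking $R$-degrees in \refthm{prod} and \refthm{origfac} and Möbius-inverting --- but the crux, equality in \reqsub{degR}{a}, is handled by a genuinely different mechanism. The paper gets the upper bound by counting ``type 1'' ($x_k^2\to R$) versus ``type 2'' ($x_k^2\to -x_{k+1}$) replacements, and then asserts equality by exhibiting the extremal permutation $q=\bar p$: the all-type-1 reduction sends each term of $\Lambda(\vx)C_p(\vx)$ to $R^{|p|}$ times a generator of $C_{\bar p}$, so the complementation permutation's term in the determinant attains degree $n\NB(n)/2$ (implicitly, one must also note that no other permutation reaches this degree, so no cancellation occurs). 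You instead pass to the top-weight part of $\vct T$, recognize it as multiplication by $\Lambda$ on the cyclic invariants of $k(R)[x_1,\dots,x_n]/(x_k^2-R)$, and diagonalize over $\mathbb{Q}(\sqrt R)$: the $2^n$ points $x_k\mapsto\epsilon_k\sqrt R$ group into $\NB(n)$ necklace lines, each with eigenvalue $(-2)^n(\prod_k\epsilon_k)R^{n/2}$, so the top-weight part of $A_n\RX$ is $\prod_{\mathcal O}\big(\lambda-\Lambda(\mathcal O)\big)$ and its $R^{n\NB(n)/2}$-coefficient is the explicit nonzero constant $\pm 2^{n\NB(n)}$. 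This structural route buys two things the paper's terse argument does not make explicit: cancellation at top degree is ruled out automatically (the leading term is computed, not just exhibited), and you learn that the leading $R$-coefficient of $A_n\RX$ --- hence, by factoring through \refthm{prod}, of every $Q_{d,c}\RX$ and of $P_n\RX$ --- is a nonzero \emph{constant}, independent of $\lambda$. That last fact is exactly what legitimizes part (c), where constants $e^{2k\pi i/c}$ are substituted for $\lambda$ in \refeq{origfac}: the paper says \reqsub{degR}{c} ``follows directly'' from taking degrees there, silently assuming the leading coefficients do not vanish at those special values, whereas your argument closes that gap. The price is heavier machinery (splitting an \'etale algebra and base change to $\sqrt R$) where the paper stays purely combinatorial.
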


\begin{proof}
We first prove \refeqsub{degR}{a}.
We recall the subscript $p$ of $C_p$ denotes a sequence of indices $k$
  in the generating monomial $\prod_k {x_k}^{e_k}$.
But for a square-free cyclic polynomial,
  each $k$ occurs no more than once,
so $p$ also represents a \emph{set} of indices,
e.g.,
$p = 1$ represents $\{1\}$,
  ($C_1 = x_1 + \dots + x_n$, generator: $x_1$) 
and
$p = 13$ represents $\{1, 3\}$
  ($C_{13} = x_1 \, x_3 + x_2 \, x_4 + \dots + x_n \, x_2$,
  generator: $x_1 \, x_3 $, assuming $n \ge 4$);
more examples are listed in \reftab{sqrfreepoly}).
In this proof,
we shall also use $p$ to denote the corresponding index set,
  $|p|$ the set size, i.e., the number of indices in the set,
  and
  $\bar p \equiv \{1,\ldots,n\} \backslash p$ the complementary set.
Obviously, $|\bar p| + |p| = n$.
Further, we will include $p$ that correspond to alternative generators
  of the same cyclic polynomial,
e.g., we allow $p = \{2\}$, $\{3\}$, \dots, $\{n\}$,
  although they represent the same cyclic polynomial $C_1$
  as $p = \{1\}$.

Next, we recall the matrix elements $T_{pq}(R)$ arise from the square-free reduction
of $\Lambda(\vx) C_p(\vx) = (-2)^n x_1 \dots x_n \, C_p(\vx)$. 
A single replacement ${x_k}^2 \rightarrow R - x_{k+1}$
produces two new terms:
in the first, ${x_k}^2 \rightarrow R$,
and in the second, ${x_k}^2 \rightarrow - x_{k+1}$.
We call the two type 1 and type 2 replacements, respectively.
If a monomial term $t(R, \vx)$ 
results from $l_1$ type 1 and $l_2$ type 2 replacements
during the reduction of a term $s(\vx)$ in $\Lambda(\vx) C_p(\vx)$,
then the degrees in $\vct x$, for any $x_k$, of $s(\vx)$ and $t(R, \vx)$ 
are related as
\begin{equation}
  \deg_\vx s(\vx) - \deg_\vx t(R, \vx) = 2 l_1 + l_2.
\label{eq:degxdiffst}
\end{equation}
Similarly, the degrees in $R$ satisfy
\[
  \deg_R s(\vx) - \deg_R t(R, \vx) = - l_1,
\]
but since $\deg_R s(\vx) = 0$,
\begin{equation}
  \deg_R t(R, \vx) = l_1.
\label{eq:degRt}
\end{equation}

Now if the monomial $t(R, \vx)$ settles in the $q$th column of 
  the matrix $\vct T(R)$, as part 
  of $T_{pq}(R) C_q(\vx)$ in \refeq{xcp0},
then 
  $t(R, \vx)$
  must be 
  a generator of $C_q(\vx)$; 
so
\begin{equation}
  \deg_\vx t(R, \vx) = |q|. 
\label{eq:degxt}
\end{equation}
Since $s(\vx)$ is part of $\Lambda(\vx) C_p(\vx)$,
  we have
\begin{equation}
  \deg_\vx s(\vx) = \deg_\vx \Lambda(\vx) + \deg_\vx C_p(\vx)
  = n + |p|.
\label{eq:degxs}
\end{equation}
From \refeqs{degxdiffst}, \req{degxt}, \req{degxs}, we get
\[
  n + |p| - |q| = 2 \, l_1 + l_2,
\]
and
\begin{equation}
  l_1  =    (n + |p| - |q| - l_2)/2 
             \le  (n + |p| - |q|)/2.
\label{eq:l1limit}
\end{equation}
By \refeq{degRt}, we get
\begin{align*}
  \deg_R T_{pq}(R)   
  = \max\{ \deg_R t(R, \vx) \}
  = \max\{ l_1 \} 
  \le   (n + |p| - |q|)/2,
\end{align*}
where the equality holds when all replacements are type 1 ($l_2 = 0$).

Finally, each term of the determinant 
$A_n\RX = \big|\lambda \, \vct I - \vct T(R)\big|$
is given by 
$(-1)^s \prod_{p} \big[\lambda \, \delta_{p q} - T_{pq}(R)\big]$,
where $p$ runs through rows of the matrix
  and $\{q\}$ is a permutation of $\{p\}$,
  with $(-1)^s$ being the proper sign.
Summing over rows under this condition yields
\begin{align*}
  \deg_R A_n\RX 
  = \max \sum_{p \in \Bset} \deg T_{pq}(R)
  \le N(n) \, \big(n + |p| - |q|\big)/2 = n N(n)/2,
\end{align*}
where equality can be achieved if $q = \bar p$ in every row.
By \refeq{necklace} we have \refeqsub{degR}{a}.
The first few values are 1, 3, 6, 12, 20, 42, 70, 144, 270, 540, 1034, 2112, 4108, \dots, starting from $n = 1$.

To show \refeqsub{degR}{b}, we take the degree in $R$ of \refeq{prod}.
So $\sum_{d|n} \beta(d) \, \big(n/d\big) = N(n) \, n/2$,
whose inversion is
  $\beta(n)/n = \sum_{d|n} \mu(n/d) \, N(d)/2 = L(n)/2$.
The last step follows from inverting \refeq{necklacelyndon}.
The first few values are 1, 1, 3, 6, 15, 27, 63, 120, 252, 495, 1023, 2010, 4095, \dots, starting from $n = 1$.

\refeqsub{degR}{c} follows directly from taking the degree in $R$ of \refeq{origfac}.
The first few values are 1, 0, 1, 3, 11, 20, 57, 108, 240, 472, 1013, 1959, 4083, \dots, starting from $n = 1$.
\end{proof}

\refeqssub{degR}{b} was long known \cite{mira},
and \reqsub{degR}{c} was recently derived \cite{blackhurst}.

\section{\label{sec:henon}H\'enon map}

We now extend the method to the H\'enon map \cite{henon}: 
\begin{equation}
  x_{k+1} = 1 + y_k - a \, {x_k}^2, \quad
  y_{k+1} = b \, x_k.
\label{eq:henon}
\end{equation}
We change variable $x_k \leftarrow a x_k$, $y_k \leftarrow a y_k$, and
\begin{equation}
  x_{k+1} = a + y_k - {x_k}^2, \quad
  y_{k+1} = b \, x_k.
\tag{$\ref{eq:henon}'$}
\label{eq:henons}
\end{equation}
Since neither $a$ nor $b$ is changed during the transformation,
\refeq{henon} and \refeq{henons} share the
same onset and bifurcation points
in terms of $a$ and $b$.
We also see that if $b \rightarrow 0$ and $a \rightarrow R$,
\refeq{henons} is reduced to the logistic map \refeq{logmaps}.

Since $y_k = b \, x_{k-1}$, we can ignore $y_k$ and work with
  cyclic polynomials of $x_k$ only,
the square free reduction is now
${x_k}^2 \rightarrow a + b x_{k-1} - x_{k+1}$.

The stability of \refeq{henons} can be found from the Jacobian matrix
\[
  J_b(x_k)
  \equiv \left(
    \begin{array}{ccc}
      \partial x_{k+1}/\partial x_k & \partial x_{k+1}/\partial y_k \\
      \partial y_{k+1}/\partial x_k & \partial y_{k+1}/\partial y_k \\
    \end{array}
  \right)
  =
  \left(
    \begin{array}{ccc}
      -2 x_k & 1 \\
      b & 0
    \end{array}
  \right).
\]
The eigenvalue $\lambda$ of the composite Jacobian
$J_b(x_1) \cdots J_b(x_n)$ can be computed from
\begin{equation}
\big|\, \lambda \, \vct I - J_b(x_1) \cdots J_b(x_n) \,\big| 
  = \lambda^2 - \Theta(\vx) \,\lambda +(-b)^n = 0,
\label{eq:henonder}
\end{equation}
where $\Theta(\vx)$ and $(-b)^n$
are the trace and determinant of 
the matrix product $J_b(x_1) \cdots J_b(x_n)$,
respectively \cite{hitzl}.
In a stable cycle, the magnitude of $\lambda$
cannot exceed 1; so we replace $\lambda$ by $+1$ or $-1$
in \refeq{henonder}
to obtain the onset or bifurcation point, respectively.
\refeq{henonder} is the counterpart of \refeq{der}.

Since cyclically rotating matrices in a product does not alter the trace,
$\Theta(\vx)$ is a cyclic polynomial of $\vx = \{ x_k \}$.
%
Thus, we can use $\Theta(\vx)$ to list \refeqs{xcp}
and then replace $\Theta(\vx)$ by $\lambda + (-b)^n/\lambda$
or $\pm\big[ 1 + (-b)^n \big]$
in \refeq{xr} to complete the solution.

\newcommand{\BB}[2]{B^{(2)}_{#1,#2}}
\newcommand{\BBB}[3]{B^{(4)}_{#1,#2,#3}}
\newcommand{\BBBB}[4]{B^{(6)}_{#1,#2,#3,#4}}

\begin{table}[h]\footnotesize
\caption{Onset and bifurcation polynomials of the $n$-cycles of the H\'enon map.}
\begin{center}
\begin{tabular*}{\linewidth}{c c c c}
\hline
$n$
& $\Theta(\vx)$
& Onset $P_n(a, b, +1)$ $^{\dagger, \ddagger}$ 
& Bifurcation $P_n(a, b, -1)$ $^{\dagger, \ddagger}$ \\ 
\hline 
1
& $-2 x_1$
& $-A - (b-1)^2$
& $-A + 3(b-1)^2$ \\
2
& $4 C_{12} + 2 b$
& $A-3 (b-1)^2$
& $A - \BB{5}{-6}$ \\
3
& 
  $-8C_{123}-2b C_1$
& 
  \begin{minipage}[h]{.35\linewidth}
  $\left( -A + \BB{7}{10} \right)$ \\
  $\left\{ \left[ A -\frac{1}{2}\BB{1}{-8} \right]^2 + \frac{27}{4} (b^2-1)^2 \right\}$
  \end{minipage}
  
& 
  \begin{minipage}[h]{.35\linewidth}
  \vspace{5mm}
  $\begin{aligned}
  -A^3&+2 \BB{4}{1} A^2 - 9 \BBB{2}{-6}{-7} A \\
  &+9 \BBBB{9}{6}{2}{-10}
  \end{aligned}$
  \vspace{5mm}
  \end{minipage} \\
4
& \begin{minipage}[h]{.2\linewidth}
  $16 C_{1234} + 4 b C_{12} + 2b^2$
  \end{minipage}
& \begin{minipage}[h]{.35\linewidth}
  $\left(A - \BB{5}{-6}\right)$ \\
  $\left\{ \left[A + (b+1)^2\right]^2 + 4(b^2-1)^2 \right\}$ \\
  $\left\{ \left[A - 3(b+1)^2\right]^3 - 108(b-1)^2(b+1)^4 \right\}$
  \end{minipage}
& 
  $\begin{aligned}
  &A^6 - 4A^5 \BB{3}{2} + A^4 \BBB{47}{68}{38} \\
  &-3A^3\BBBB{47}{62}{-83}{-212} + \left({\BBB{17}{12}{-6}}\right)^3  \\
  &+ A^2\BBB{17}{12}{-6}\BBB{31}{-60}{-186}
  \end{aligned}$ 
\\
5
& 
  $\begin{aligned}
  -&32C_{12345}\\
  -&8bC_{123} -2b^2C_1
  \end{aligned}$
&
  \begin{minipage}[m]{.35\linewidth}
  \vspace{3mm}
  $\begin{aligned}
    &-\Big( A^4  -\BB{1}{-12} A^3 + \BBB{1}{6}{21}A^2 \\
    &+\BBBB{9}{-14}{70}{-80} A + \BBB{1}{1}{1}\BBB{31}{-89}{121}
    \Big) \\
    &\Big( A^{11} - \BB{31}{14} A^{10} + 2 \BBB{208}{206}{377} A^9 \\
    &-\cdots -75728722 b - 28629151\Big)
  \end{aligned}$
  \end{minipage}
& 
  $\begin{aligned}
  -A^{15} &+ 2 \BB{16}{1} A^{14}- \BBB{448}{60}{669} A^{13} \\
  &+ \cdots + 1291467969
  \end{aligned}$
\\
\vdots & \vdots & \vdots & \vdots 
\\
9
& 
  $\begin{aligned}
  &-512 C_{1\dots9}\\
  &-\cdots-2b^4 C_1
  \end{aligned}$

& 
  $\begin{aligned}
  &-(A^6 + 12 b A^5 + \dots) 
   \big( A^6 - 4\BB{4}{1}A^5 + \dots \big) \\ 
  &\Big( A^{240} - 8\BB{61}{1}A^{239} +4\BBB{29525}{1019}{58188}A^{238} \\
  &  -\dots +120670698649\dots712084645033 \Big)
  \end{aligned}$

& 
  $\begin{aligned}
  &-\big(
   A^{252} - 504 (1+b^2) A^{251} \\
  +& 4 \BBB{31500}{16}{62067} A^{250} - \dots \\
  +& 5842146539\dots9260477441
  \big)
  \end{aligned}$
\\
\hline
\multicolumn{4}{p{\linewidth}}{
$^\dagger$
Definitions: $A \equiv 4 a$,
$\BB{p}{q} \equiv p (b^2+1) + q b$, 
$\BBB{p}{q}{r} \equiv p (b^4+1) + q (b^3 + b) +r b^2$,
$\BBBB{p}{q}{r}{s} \equiv p (b^6 + 1) + q (b^5 + b) +r (b^4 + b^2) + sb^3$,
$\dots$.
}
\\
\multicolumn{4}{p{\linewidth}}{
$^\ddagger$ 
  The onset polynomials for $n$ from 1 to 4,
  and the bifurcation polynomials for $n$ from 1 to 3,
  agree with those in ref. \cite{hitzl}.
}\\
\hline
\end{tabular*}
\end{center}
\label{tab:henonpoly}
\end{table}

We computed the polynomials of $a$ and $b$ 
  at the onset and bifurcation points for $n$ up to 9.
The polynomials of $a$ and $b$
  at the onset and bifurcation points,
  as well as $\Theta(\vx)$, 
  are listed in \reftab{henonpoly} for small $n$
(for larger $n$, see the website in \refsec{end}).

\section{\label{sec:cubic}Cubic map}

We now study the following cubic map \cite{strogatz}
\begin{equation}
  x_{k + 1} = f(x_k) = r \, x_k - {x_k}^3.
\label{eq:cubic}
\end{equation}
Since the new replacement rule 
\begin{equation}
  {x_k}^3 \rightarrow r \, x_k - x_{k+1}
\label{eq:cubreplace}
\end{equation}
no longer eliminates squares,
we must extend the basis set of cyclic polynomials
  from the square-free ones to the cube-free ones,
  in using \refeq{xcp}.
We include in the basis of expansion
  $C_{112} = {x_1}^2 x_2 + {x_2}^2 x_3 + \dots + {x_n}^2 x_1$ ($n\ge3$),
but not
  $C_{1112} = {x_1}^3 x_2 + {x_2}^3 x_3 + \dots + {x_n}^3 x_1$.

%
However, we only need the cube-free cyclic polynomials of even degrees in $\vx$
to solve the problem,
because \refeq{cubic} contains only linear and cubic terms,
a cyclic polynomial with an odd (even) degree in $\vx$
can never be reduced to one with an even (odd) degree
by \refeq{cubreplace}.
%
%
For technical reasons, we will not use polynomials of odd degrees,
because the map allows a symmetric $2n$-cycle:
$x_1, x_2, \ldots, x_n, -x_1, -x_2, \ldots, -x_n$
(see \reffig{oddcycle}),
which makes all odd cyclic polynomials zero,
e.g., $C_1 = x_1 + x_2 + \dots + x_n - x_1 - x_2 - \dots - x_n = 0$.
Thus, the zero determinant condition, similar to that in \refeq{xr},
  would be useless for these cycles,
  if the odd-degree polynomials were used.

\begin{figure}[h]
  \begin{minipage}{\linewidth}
    \begin{center}
        \includegraphics[angle=-90, width=\linewidth]{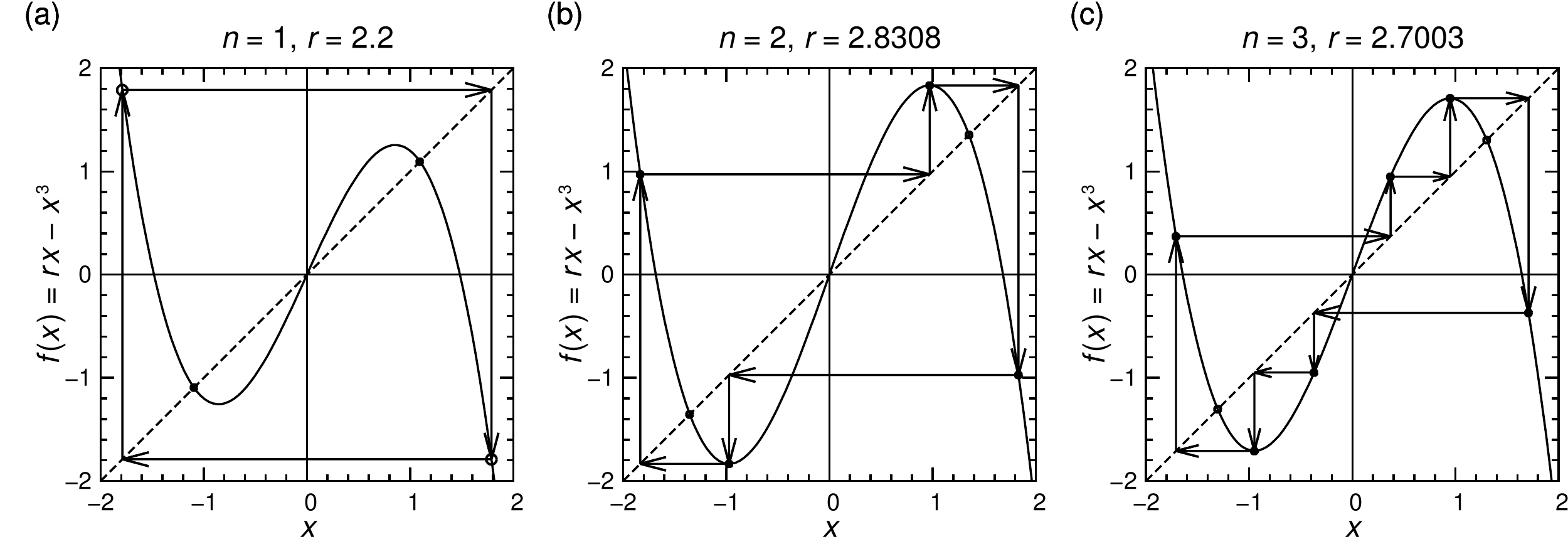}
    \end{center}
  \end{minipage}%
  \caption{\label{fig:oddcycle}
  Odd-cycles of the cubic map.}
\end{figure}

We therefore have a theorem similar to \refthm{sqrfree}.
\begin{theorem}
  For the cubic map \refeq{cubic},
  any cyclic polynomial $K(\vx)$ of
  an $n$-cycle orbit
  $\vct x = \{x_1, \ldots, x_n\}$
  with an even degree in $\vct x$
  is a linear combination of 
the even cube-free cyclic polynomials $C_p(\vx)$:
\[
  K(\vx) = \sum_{p \in \Bset} f_p(R) C_p(\vx),
\]
  where $\Bset = \{0, 11, 12, 13, \ldots, 1122, 1123, \ldots \}$ is
  the set of indices of all even cube-free cyclic polynomials,
  and $f_p(R)$ are polynomials of $R$.
  \label{thm:cubfree}
\end{theorem}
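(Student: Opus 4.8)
The plan is to mirror the square-free reduction of \refthm{sqrfree}, replacing the logistic substitution by the cubic rule \refeq{cubreplace}, namely $x_k^3 \to r\,x_k - x_{k+1}$, and tracking one extra invariant — the parity of the total degree — that will let me discard every odd cube-free polynomial from the basis. Concretely, starting from a cyclic $K(\vx)$ of even degree, I would repeatedly locate any monomial carrying a power $x_k^{e_k}$ with $e_k \ge 3$, write it as $x_k^{e_k-3}\,x_k^3$, and apply \refeq{cubreplace} to obtain $r\,x_k^{e_k-2} - x_k^{e_k-3}\,x_{k+1}$. Iterating until no exponent exceeds $2$ leaves a cube-free polynomial, and collecting the terms free of any $x_k$ as the coefficient of $C_0(\vx)=1$ completes the expansion.

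First I would establish termination. Each firing of the rule sends a single monomial of total degree $D$ to the two monomials $r\,x_k^{e_k-2}(\cdots)$ and $-x_k^{e_k-3}x_{k+1}(\cdots)$, of total degrees $D-2$ and $(D-3)+1 = D-2$ respectively; so both output terms drop the total degree by exactly two. Hence the multiset of monomial total degrees strictly decreases in the Dershowitz--Manna multiset order at every step — one degree-$D$ element is removed and only strictly smaller ones are inserted (any subsequent merging with, or cancellation against, existing lower-degree terms can only decrease the multiset further). Since that order is well founded, the process halts after finitely many steps, and at the halting configuration no monomial carries an exponent $\ge 3$, i.e.\ the result is cube-free, as required.

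Second I would invoke parity, the one genuinely new ingredient relative to \refthm{sqrfree}. Because every firing changes the total degree by exactly $-2$, the parity of the degree is preserved throughout; starting from an even-degree $K(\vx)$, every intermediate monomial — and in particular every surviving cube-free cyclic term — has even degree. This is precisely what licenses restricting the basis to the \emph{even} cube-free cyclic polynomials, i.e.\ to $p \in \Bset$ as defined in the statement. Cyclicity is inherited as in the logistic proof: the rule \refeq{cubreplace} commutes with the cyclic shift $x_k \to x_{k+1}$ (applying it at index $k$ is carried to index $k+1$ under the shift), so reducing a cyclic $K(\vx)$ yields a cyclic polynomial, which therefore groups into the generators $C_p(\vx)$. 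Finally, the only new quantity introduced by \refeq{cubreplace} is the map parameter $r$ (written $R$ in the statement), so each coefficient $f_p$ is a polynomial in it.

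I expect the main obstacle to be the termination bookkeeping, not the algebra. Unlike the logistic substitution $x_k^2 \to R - x_{k+1}$, which drives every exponent straight below $2$, the cubic rule only lowers $e_k$ while seeding a new factor $x_{k+1}$, so it can transiently manufacture fresh squares or cubes at neighbouring indices; the naive monovariant ``number of cubes'' need not decrease. The careful point is therefore to isolate the single clean monovariant — total degree, which falls by exactly two on \emph{both} output terms — and to confirm that this same ``$-2$'' simultaneously delivers termination and the parity invariant that collapses the basis to its even part. Verifying that the two output terms share the identical degree drop $D-2$ (one from the reduced power $x_k^{e_k-2}$, one from the cross term $x_k^{e_k-3}x_{k+1}$) is the crux; everything else parallels \refthm{sqrfree}.
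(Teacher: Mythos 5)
Your proof is correct and is essentially the paper's own argument: the paper gives no separate proof of this theorem, saying only that it is ``similar to \refthm{sqrfree}'' after having noted the replacement rule \refeq{cubreplace} and the fact that this rule can never change the parity of the degree --- precisely the cube-free reduction, the degree-drops-by-exactly-two observation, and the resulting parity invariant that you spell out. Your multiset-order termination bookkeeping and the shift-equivariance of the rule are just careful elaborations of what the paper leaves implicit.
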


With the above change, the rest derivation is similar to that of the logistic map.
The new $\Lambda(\vx)$ should be 
$\prod_{k=1}^n f'(x_k) = \prod_{k=1}^n (r - 3 {x_k}^2)$.
The polynomials of $r$ 
  at the onset and bifurcation points 
for some small $n$ are shown in \reftab{cubpolygen} (general $\lambda$)
and \reftab{cubpoly} ($\lambda = \pm1$);
for larger $n$ up to 8, 
we have saved the data on the website in \refsec{end}.
The representative $r$ values are listed in \reftab{crval}.
For complex $r$ and $\vx$,
we have plotted \reffig{cubbulb} for regions of stability.

\begin{table}[h]\footnotesize
  \caption{
  Minimal polynomials $P_n(r, \lambda)$ of
  the $n$-cycles of the cubic map \refeq{cubic}.
  }
\begin{center}
\begin{tabular}{lc}
\hline
  $n$ 
& $P_n(r, \; \lambda)$ $^\dagger$  
\\
\hline
1
&
$(\lambda - r) (\lambda + 2 r - 3)$
\\
2
&
  $
  \big[
  \cancel{\lambda - (2r+3)^2}
  \big]
  (\lambda + 2r^2 - 9)
  $
\\
3
&
$\begin{aligned}
  &\lambda^4
  + 2 (r+6)(r^2-9) \lambda^3
  -6 \big(8 r^6+12 r^5-66 r^4-81 r^3+54 r^2-243 r-729\big)
  \lambda^2 
+2 (r^2-9) \big(16 r^7-252 r^5\\
-&216 r^4 +648 r^3+972 r^2+2187 r+4374\big) \lambda 
+2 r (2r-3) (2 r+3)^2 (r^2-9) (2 r^2 - 9)^2 (4 r^2+9) + 531441
\end{aligned}$
\\
4
&
\begin{minipage}{.97\linewidth}
\vspace*{1mm}
$\begin{aligned}
&\big[ \cancel{
  (\lambda-8r^4+54r^2+81)^2
-4 (r^2-9)^2 \lambda
} \big]
\,^\ddagger\,
\big[
\lambda^8
+2(5 r^4-324) \lambda^7
+2(112 r^8-1296 r^6+3807 r^4-91854) \lambda^6 \\
+&\dots
+18075490334784 r^8
-61004779879896 r^6
-411782264189298 r^4
+1853020188851841
\big]
\end{aligned}$ 
\end{minipage}
\\
\vdots & \vdots
\\
\hline
\multicolumn{2}{p{\textwidth}}{
$^\dagger$
The factors from odd-cycles (see \refsec{oddcycle}) are struck out.
}\\
\multicolumn{2}{p{\textwidth}}{
$^\ddagger$
The $n = 4$ odd-cycles satisfy
$(\lambda^\odd + 9)^2 - 2(\lambda^\odd-27) r^2 - 8 r^4 = 0$,
  where $\lambda^\odd=\pm\sqrt\lambda$.
} \\
\hline
\end{tabular}
\end{center}
\label{tab:cubpolygen}
\end{table}

\begin{table}[h]\footnotesize
\caption{Onset and bifurcation polynomials of the $n$-cycles of the cubic map \refeq{cubic}.}
\begin{center}
\begin{tabular*}{\linewidth}{l c c}
\hline
$n$ 
& 
Onset $P_n(r, +1)$ $^\dagger$ 
&
Bifurcation $P_n(r, -1)$ $^\dagger$
\\ 
\hline
1
&
$-(r-1)^2$
&
$-(r+1)(r-2)$
\\
2
&
$-2 (r-2) (r+1) (r+2)^2$
&
$-(r^2-5) (2 r^2+6 r+5)$
\\
3
&
\begin{minipage}{.45\linewidth}
\vspace{3mm} 
$\begin{aligned}
&(r^2+r+1) (4 r^2-14 r+13) \\
&(4 r^8+16 r^7-35 r^6-206 r^5-113 r^4\\
&+376 r^3+715 r^2+1690 r+2197)
\end{aligned}$
\vspace{1mm} 
\end{minipage}
&
\begin{minipage}{.5\linewidth}
\vspace{3mm}
$\begin{aligned}
&16 r^{12}+24 r^{11}-288 r^{10}-434 r^9\\
+&1539 r^8+2358 r^7-1434 r^6-2556 r^5-8541 r^4\\
-&11816 r^3+15288 r^2+24696 r+38416
\end{aligned}$
\vspace{1mm}
\end{minipage}
\\
4
&
\begin{minipage}{.45\linewidth}
\vspace{3mm} 
$\begin{aligned}
-&8(r^2-8) (r^2-5) (r^2+1) (2 r^2-6 r+5) \\
&(2 r^2+6 r+5) (2 r^4-13 r^2-25)^2 \\
&(1024 r^{22}-32512 r^{20}+402304 r^{18} \\
-&2364832 r^{16}+5389924 r^{14}+9715769 r^{12} \\
-&73067038 r^{10}+58934785 r^{8}+235761152 r^6 \\
-&160907264 r^4-671088640 r^2-2097152000)
\end{aligned}$
\vspace{3mm}
\end{minipage}
&
$\begin{aligned}
&-(16 r^8-216 r^6+410 r^4+2142 r^2+1681) \\
&(8192 r^{32}-387072 r^{30}+7834624 r^{28} -88031232 r^{26}\\
&+585876512 r^{24}-2158227720 r^{22} +2211361312 r^{20}\\
&+15958823175 r^{18}  -68871388441 r^{16} +59290039854 r^{14}\\
&+234882618673 r^{12} -524807876277 r^{10} -72612143404 r^8\\
&+308406843576 r^6 +1539579145957 r^4-7984925229121)
\end{aligned}$
\\
\vdots & \vdots & \vdots
\\
8
&
$\begin{aligned}
&-8192 (r^4+1) (2 r^4-18 r^2+41) \\
&(8 r^4-48 r^3+108 r^2-108 r+41) \\
&(8 r^4+48 r^3+108 r^2+108 r+41) \dots\\
&(22016722240\dots14954924752896r^{3108} \\
&-\dots-180097954\dots2522413056\times 10^{400})
\end{aligned}$
&
$\begin{aligned}
-&(17592186044416 r^{80} +\dots+ \\
&144564714832407908402064153121600801) \\
&(4841528421712030\dots03048551060078592 r^{3200}\\
&-\dots-25263420710\dots173884723232001)
\end{aligned}$
\\
\hline
\multicolumn{3}{p{\linewidth}}
{
$^\dagger$
Although $P_n(r, \lambda)$ is the minimal polynomial for a general $\lambda$,
it may contain a pre-factor (e.g., $-2$ in the $n=2, \lambda=+1$ case).
}\\
\hline
\end{tabular*}
\end{center}
\label{tab:cubpoly}
\end{table}

\begin{table}[h]\footnotesize
  \caption{
  Smallest positive $r$ at the onset and bifurcation points
  of the $n$-cycles of the cubic map.
  }
\begin{center}
\begin{tabularx}{\textwidth}{
  >{\hsize=0.5\hsize\centering\arraybackslash}X  
  >{\hsize=1.6\hsize}X  
  >{\hsize=1.6\hsize}X  
  >{\hsize=0.3\hsize\raggedright\arraybackslash}X | 
  >{\hsize=0.5\hsize\centering\arraybackslash}X  
  >{\hsize=1.6\hsize}X  
  >{\hsize=1.6\hsize}X  
  >{\hsize=0.3\hsize\raggedright\arraybackslash}X  
}
\hline
  $n^\dagger$ 
& Onset$^\ddagger$  
& Bifurcation$^\ddagger$
& \#$^*$ 
& 
  $n^\dagger$     
& Onset$^\ddagger$
& Bifurcation$^\ddagger$
& \#$^*$ \\
\hline
$1$     & $1.0000000000_1$      &  $2.0000000000_1$       & 2   & 
$6$     & $2.3334877526_{304}$  &  $2.3355337580_{336}$   & 56  \\
$2'$    & $2.0000000000_1$      &  $2.2360679775_2$       & 2   & 
$6'$    & $2.4608286739_{12}$   &  $2.4657090579_{336}$   & 4   \\
$3$     & $2.4504409645_{8}$    &  $2.4608286739_{12}$    & 4   & 
$7$     & $2.3729872678_{1080}$ &  $2.3732727868_{1092}$  & 156 \\
$4$     & $2.5478350393_{22}$   &  $2.5488312193_{32}$    & 8   &
$8$     & $2.3525990555_{3108}$ &  $2.3527637793_{3200}$  & 400 \\
$4''$   & $2.2360679775_{2}$    &  $2.2880317545_{32}$    & 2   &
$8'$    & $2.5488312193_{32}$   &  $2.5493247379_{3200}$  & 8   \\
$5$     & $2.3939250274_{112}$  &  $2.3957922744_{120}$   & 24  &
$8'''$  & $2.2880317545_{32}$   &  $2.2992279397_{3200}$  & 2   \\
\hline
\multicolumn{8}{p{\textwidth}}{
$^\dagger$
  $\,'$, $\,''$, or $\,'''$ means
    a cycle under
    the first, second, or third successive period-doubling, respectively.
} \\
\multicolumn{8}{l}{
$^\ddagger$
  The subscripts are the degrees of the corresponding minimal polynomial.
} \\
\multicolumn{8}{l}{
$^*$
  The number of similar cycles
    (for $n > 1$, only half of them have positive $r$).
} \\
\hline
\end{tabularx}
\end{center}
\label{tab:crval}
\end{table}

\begin{figure}[h]
  \begin{minipage}{\linewidth}
    \begin{center}
        \includegraphics[angle=-90, width=\linewidth]{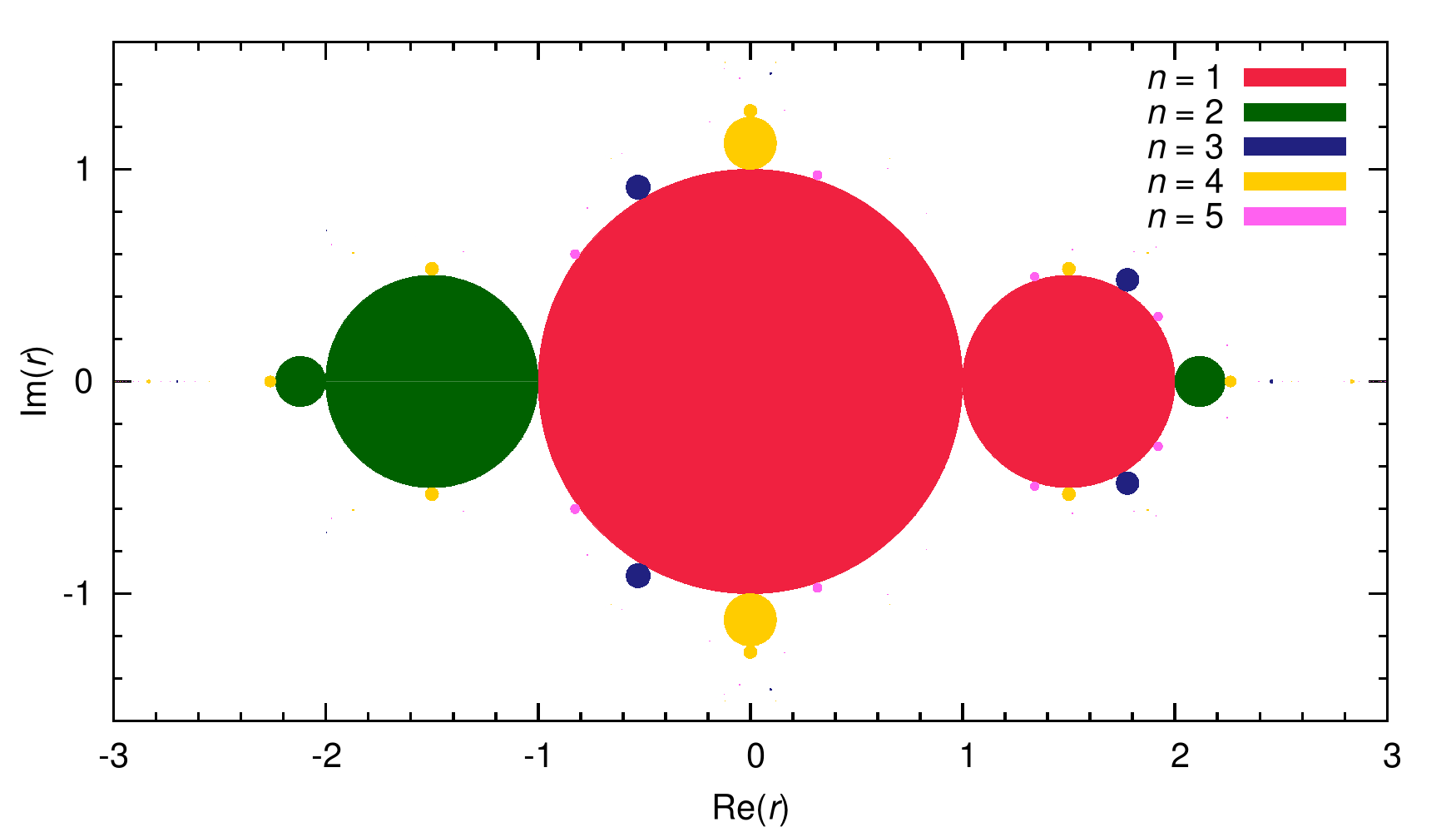}
    \end{center}
  \end{minipage}%
  \caption{
  \label{fig:cubbulb}
  Stable regions of the $n$-cycles of the cubic map \refeq{cubic} 
  with a complex $R$;
  obtained from $P_n(R, e^{i\phi}) = 0$ with $\phi \in [0, 2\pi]$,
  cf. \reffig{logbulb}.
}
\end{figure}

\subsection{Counting cycles}

We now compute the number of the even cube-free cyclic polynomials
by establishing a one-to-one correspondence between
the cube-free cyclic polynomials and the \emph{ternary} necklaces, 
in which each bead of the string is assigned a number
0, 1, or 2, instead of just 0 or 1.
For example, the necklace $\mathbf{212001\cdots}$
corresponds to $C_{112336}(\vx)$, 
whose generator is
${x_1}^2 \, {x_2} \, {x_3}^2 \, x_6$:
the first bead is \textbf{2} for ${x_1}^\mathbf{2}$,
 the second is \textbf{1} for ${x_2}^\mathbf{1}$,
 the third is \textbf{2} for ${x_3}^\mathbf{2}$,
 and the sixth is \textbf{1} for ${x_6}^\mathbf{1}$.
A necklace is even, 
if the corresponding cyclic polynomial has an even degree in $\vx$.
This means that the sum of numbers (0, 1, or 2)
  on the beads of the necklace, 
  which equals the degree in $\vx$ of the polynomial, is also even.

%

\begin{theorem}
The number of the even ternary necklaces 
or the cube-free cyclic polynomials for the cubic map of even degrees in $\vx$ 
is given by
\begin{equation}
  N_e(n) = \frac{1}{n} \sum_{c d = n} \phi(c)
    \left[
      3^d - \odd(c) \frac{3^d-1}{2}
    \right],
\label{eq:cubnecklace}
\end{equation}
where
$\odd(c) \equiv \big[1 - (-1)^c\big]/2$ is 1 if $c$ is odd or 0 if even.
\label{thm:cubnecklace}
\end{theorem}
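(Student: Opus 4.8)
The plan is to reduce the statement, via the bijection with even ternary necklaces already established in the paragraph preceding the theorem, to a pure counting problem on ternary strings, and then to evaluate that count with Burnside's lemma applied to the cyclic group. Concretely, let $S \subset \{0,1,2\}^n$ be the set of ternary strings whose digit sum is even. Since a circular shift merely permutes the beads, it preserves the digit sum, so the cyclic group $C_n$ acts on $S$, and the even ternary necklaces are exactly the orbits of this action. Burnside's lemma then gives $N_e(n) = \tfrac1n \sum_{k=0}^{n-1} |\mathrm{Fix}(\rho_k)|$, where $\rho_k$ is the rotation by $k$ positions and $\mathrm{Fix}(\rho_k)$ is the set of strings in $S$ fixed by $\rho_k$.

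The key computation is $|\mathrm{Fix}(\rho_k)|$. The rotation $\rho_k$ decomposes the $n$ beads into $d \equiv \gcd(n,k)$ cycles, each of length $c \equiv n/d$; a string is fixed by $\rho_k$ precisely when it is constant on each cycle, say equal to $v_i \in \{0,1,2\}$ on the $i$th cycle. Its digit sum is then $c \sum_{i=1}^d v_i$, so the string lies in $S$ iff $c\sum_i v_i$ is even. I would split on the parity of $c$: if $c$ is even the constraint is vacuous and all $3^d$ choices of $(v_1,\dots,v_d)$ qualify; if $c$ is odd the constraint becomes $\sum_i v_i$ even, and since among the three bead values only $1$ is odd, a standard generating-function count gives $\tfrac12\big[(2+1)^d+(2-1)^d\big] = \tfrac{3^d+1}{2}$ admissible tuples. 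Both cases are summarized by $|\mathrm{Fix}(\rho_k)| = 3^d - \odd(c)\,\tfrac{3^d-1}{2}$.

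Finally I would reorganize the Burnside sum by cycle type. Grouping the rotations $\rho_k$ according to the value of $d=\gcd(n,k)$, there are exactly $\phi(n/d)=\phi(c)$ rotations with a given $d$ (the identity $\#\{k:\gcd(n,k)=d\}=\phi(n/d)$, already used for the binary count), and each contributes the same value $3^d-\odd(c)\,\tfrac{3^d-1}{2}$. Substituting and writing the sum over $cd=n$ yields
\[
  N_e(n) = \frac1n \sum_{cd=n} \phi(c)\left[3^d - \odd(c)\,\frac{3^d-1}{2}\right],
\]
which is the claimed formula; the bijection of the preceding paragraph then transfers this count to the even cube-free cyclic polynomials. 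As a sanity check, the identity rotation ($k=0$, $d=n$, $c=1$) contributes $(3^n+1)/2$, matching $|S|$, the number of even ternary strings.

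I expect the only delicate step to be the parity bookkeeping in $|\mathrm{Fix}(\rho_k)|$: one must notice that the digit sum of a fixed string is $c\sum_i v_i$ rather than $\sum_i v_i$, so it is the rotation order $c$ (not the number of cycles $d$) whose parity governs the constraint, and then count the even-sum tuples correctly via $\tfrac{3^d+1}{2}$. The remaining ingredients—invariance of $S$ under $C_n$, Burnside's lemma, and the orbit-counting identity—are standard and parallel the binary necklace count of \refthm{necklace}.
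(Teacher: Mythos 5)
Your proof is correct and is essentially the paper's argument in different packaging: the paper's two-way count of the weighted sum $\sum_{cd=n}\phi(c)\,T(d,c)$, where each even necklace of period $m$ contributes $m\sum_{m|d,\,d|n}\phi(n/d)=n$ and odd necklaces contribute nothing, is exactly the orbit-counting computation that Burnside's lemma encapsulates, with your grouping of rotations by $d=\gcd(n,k)$ into classes of size $\phi(n/d)$ playing the same role. Your fixed-point count $3^d-\odd(c)\,\tfrac{3^d-1}{2}$, including the parity split on $c$ and the evaluation $\tfrac12\big[(2+1)^d+(2-1)^d\big]$, coincides with the paper's $T(d,c)$ and its generating-function computation of $(3^n+1)/2$ even strings, so the two proofs match step for step.
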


\begin{proof}

We first show that the number of even ternary strings is $(3^n + 1)/2$.
Consider the generating function 
\[
  Z\big(\{x_1, x_2,\dots, x_n\}\big) = \prod_{i=1}^n (1+x_i+{x_i}^2),
\]
where $1$, $x_i$, and $x_i^2$ correspond to the states that bead $i$
taking the number $0$, $1$, and $2$, respectively;
and the product over $n$ sums over states of independent beads.
In the expansion of $Z(\vx)$,
 each term (which takes the form
  ${x_1}^{e_1} {x_2}^{e_2} \dots {x_n}^{e_n}$,
  with $e_i = 0, 1, 2$)
represents a unique ternary string $e_1 e_2 \dots e_n$,
which is even, if $e_1 + e_2 + \dots + e_n$ is so.
By setting $x_1 = x_2 = \dots = x_n = 1$,
$Z=3^n$ equals the total number of ternary strings.
By setting $x_1 = x_2 = \dots = x_n = -1$,
a term representing an even (odd) ternary string
  is $+1$ ($-1$);
and $Z=(1-1+1)^n=1$ 
equals the difference between the number of even strings
  and that of odd strings.
Thus, the average $(3^n+1)/2$
gives the number of even ternary strings.

\begin{table}[h]\footnotesize
\caption{Two ways of counting the $N_e(2) = 4$ ternary necklaces for $n = 2$. 
}
\begin{center}
\begin{tabularx}{\textwidth}{c | c  c c | c  | c | c | c | c }
\hline
        & \multicolumn{3}{c|}{$m = 1$} 
        & $m = 2$   
        & $c = n/d$
        & $T(d, c)$ $^\dagger$
        & $\phi(c)$ & $\phi(n/d) T(d, c)$\\ 
\hline 
$d = 1$ & \hspace{1mm} $\mathbf{0}_{\times2}$ \hspace{1mm} 
        & \hspace{1mm} $\mathbf{1}_{\times2}$ \hspace{1mm} 
        & \hspace{1mm} $\mathbf{2}_{\times2}$ \hspace{1mm} 
        & & $2$ & $3$ & $1$ & 3 \\
\hline
$d = 2$ & $\mathbf{00}$ 
        & $\mathbf{11}$
        & $\mathbf{22}$
        & \hspace{1mm} $\mathbf{20} (02)$ $^\ddagger$ \hspace{1mm} 
        & $1$ & $3+1\cdot2=5$ & $1$ & 5 \\
\hline
$m \sum_{m|d, d|n} \phi(\frac{n}{d})$
      & \multicolumn{3}{c|}{$1\cdot(1+1) = 2$}
      & $2\cdot1 = 2$ &    \multicolumn{3}{c|}{} &  $\downarrow$\\
\hline
$N_e(n)\cdot n$ 
    & \multicolumn{3}{c|}{$3 \cdot 2$} & $1 \cdot 2$ & 
  \multicolumn{3}{c|}{$\rightarrow$} & $4\cdot2 = 3 + 5$ \\
\hline
\multicolumn{9}{p{\linewidth}}{
  Bold strings are necklaces; others are their cyclic versions.
}
\\
\multicolumn{9}{p{\linewidth}}{
The subscript of a string means the number of repeats;
e.g.,
$\mathbf{1}_{\times2}$ means $\mathbf{1}$ repeated twice, or $\mathbf{11}$;
}
\\
\multicolumn{9}{p{\linewidth}}{
$^\dagger$ $T(d, c) \equiv 3^d - \odd(c)(3^d-1)/2$,
which is $3^d$ if $c$ is even,
or $(3^d+1)/2$ if $c$ is odd.
}
\\
\multicolumn{9}{p{\linewidth}}{
$^\ddagger$
The odd binary strings $\mathbf{10}$, $01$, $\mathbf{21}$, and $12$
  do not contribute to the sum, and are excluded from $T(d, c)$.
}
\\
\hline
\end{tabularx}
\end{center}
\label{tab:countcubnecklace}
\end{table}

The rest counting process is similar to that in \refsec{degprimfac}:
we construct the sum $n N_e(n)$ by two ways, as exemplified in \reftab{countcubnecklace}.
In the first way, 
  for a fixed $d$ ($d|n$),
  if $c\equiv n/d$ is even, 
    we count all ternary strings whose period $m$ divide $d$,
  but if $c$ is odd, we count only ternary strings
    whose first $d$ beads are even
    (because repeating an odd string an odd number times 
    does not yield an even string);
  in either case, we multiple the result by $\phi(c)$.
The process for a fixed $d$ corresponds to a row of \reftab{countcubnecklace}.
Repeating the process for all $d$ gives
  $\sum_{cd = n} T(d, c) \, \phi(c)$,
  where $T(d, c)$ is the total number $3^d$ of period-$d$ ternary strings if $c$ is even,
  or the number $(3^d+1)/2$ of even strings if $c$ is odd.

In the second way,
  we look at the contribution from each necklace to the above sum.
An even period-$m$ necklace contributes a total of
  $m \times \sum_{m|d, d|n} \phi(n/d) = n$
  \big[the multiplier $m$ is for the $m$ cyclic versions, cf. \refeq{mphind}\big],
  while an odd necklace contributes nothing.
Thus, the sum equals $N_e(n)\cdot n$.
The process for a fixed necklace corresponds to 
  a column of \reftab{countcubnecklace}.

So
\[
  N_e(n) \cdot n = \sum_{cd = n} T(d, c) \, \phi(c),
\]
which is \refeq{cublyndon}
after we divide both sides by $n$.
\end{proof}

$N_e(n)$ = 2, 4, 6, 14, 26, 68, 158, 424, \dots, starting from $n = 1$.

The characteristic polynomial $A_n(r, \lambda)$
  from the determinant equation 
  has a degree $N_e(n)$ in $\lambda$.
Again, it encompasses the factors for the $n$-cycles 
  and the shorter $d$-cycles, as long as $d|n$.
The minimal polynomial for the $n$-cycles can be obtained 
  by \refthm{primfac} with proper substitutions
  \big[$R\rightarrow r$, 
    $A_n(R, \lambda) \rightarrow A_n(r, \lambda)$,
    etc.\big].
The degree of the polynomial is given by

\begin{theorem}
The degree in $\lambda$ of 
the minimal polynomial $P_n(r, \lambda)$ of the $n$-cycles is
\begin{equation}
  L_e(n) = \frac{1}{n} \sum_{cd = n} \mu(c)
    \left[
      1 + \odd(c)\frac{3^d-1}{2}
    \right],
\label{eq:cublyndon}
\end{equation}
where $\odd(n) = [1 - (-1)^n]/2$ 
is $1$ for an odd $n$,
but $0$ for an even $n$.
\label{thm:cublyndon}
\end{theorem}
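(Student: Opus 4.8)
The plan is to follow the template of \refthm{lyndon}, with one new ingredient: the cubic map $f(x)=rx-x^3$ is odd, so the sign flip $\sigma\colon x_k\mapsto -x_k$ commutes with $f$ and carries each $n$-cycle to another $n$-cycle. Writing $\lambda_c=\Lambda(\vx^{(c)})=\prod_{k=1}^{n}\big(r-3\,{x_k^{(c)}}^2\big)$ for the multiplier of an $n$-cycle $c$, the minimal polynomial still factors as $P_n(r,\lambda)=\prod_c(\lambda-\lambda_c)$, so its degree in $\lambda$ is a cycle count. The key observation is that every basis element $C_p(\vx)$ is an \emph{even} cube-free cyclic polynomial, hence invariant under $\sigma$; consequently a cycle $c$ and its reflection $\sigma c$ produce the identical coordinate vector $\{C_p(\vx^{(c)})\}$ and the identical multiplier $\lambda_c=\lambda_{\sigma c}$. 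Thus reflected cycles collapse to a single eigenvector and a single factor of $P_n$, and (assuming distinct reflection classes give distinct $\lambda_c$, with the perturbation device of Remark 1 of \refthm{lyndon} otherwise) I get $\deg_\lambda P_n=L_e(n)=$ the number of $n$-cycles counted up to $\sigma$.

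I would then count these reflection classes by Burnside's lemma, $L_e(n)=\tfrac12\big(L_3(n)+s_n\big)$, where $L_3(n)=\tfrac1n\sum_{d|n}\mu(n/d)\,3^d$ is the ordinary number of $n$-cycles — obtained exactly as in \refthm{lyndon} from the generic count of $3^n$ fixed points of $f^n$ and M\"obius inversion of $3^n=\sum_{d|n} d\,L_3(d)$ — and $s_n$ is the number of self-symmetric cycles, those with $\sigma c=c$. A self-symmetric cycle occurs only for even $n=2e$ and has the form $x_1,\dots,x_e,-x_1,\dots,-x_e$ with $f^{e}(x_1)=-x_1$, so $s_n$ is controlled by the twisted equation $f^{e}(x)=-x$, which generically has $3^{e}$ roots. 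Sorting these roots by the self-symmetric cycle carrying them — a half-period-$e'$ cycle supplies roots precisely when $e/e'$ is odd, and the origin supplies one stray root — gives the decomposition \[ 3^{e}=1+\sum_{\,e'|e,\ e/e'\ \text{odd}} 2e'\,s_{2e'}. \]

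Finally I would invert this relation for $s_n$ and assemble the formula. Because the admissible $e'$ are exactly the divisors of $e$ with odd cofactor, the inversion is a M\"obius inversion over the odd part of $e$ and returns $s_n$ as an alternating combination of the $\big(3^{\bullet}-1\big)$ terms; feeding this into $L_e(n)=\tfrac12(L_3(n)+s_n)$, regrouping by the M\"obius coefficient $\mu(c)$ with $c=n/d$, and using $\sum_{c|n}\mu(c)=[\,n=1\,]$ to absorb the constants, reproduces the stated closed form — the factor $\odd(c)$ being exactly the bookkeeping that confines the self-symmetric contribution to even $n$, i.e.\ to odd cofactors $c$. As a consistency check one may note that $\deg_\lambda A_n=N_e(n)$ (\refthm{cubnecklace}) together with the fact that $A_n$ encompasses every $d$-cycle-up-to-$\sigma$ with $d|n$ gives $N_e(n)=\sum_{d|n}L_e(d)$, whose M\"obius inversion against the explicit $N_e$ must return the same $L_e(n)$. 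The main obstacle is the symmetric-cycle bookkeeping: correctly identifying which shorter cycles feed $f^{e}(x)=-x$ (the ``$e/e'$ odd'' condition and the origin), inverting over the odd divisor sublattice rather than the full divisor lattice, and then matching the result to the compact $\odd(c)$ form. The degree-equals-count step and the ordinary count $L_3(n)$, by contrast, are routine transcriptions of \refthm{lyndon}.
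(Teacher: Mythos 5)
Your proof is correct in outline, but it follows a genuinely different route from the paper's. The paper never counts cycles dynamically for the cubic map: it takes the relation $N_e(n)=\sum_{d|n}L_e(d)$ --- inherited from the product structure of \refthm{prod} (adapted to the cubic map) together with $\deg_\lambda A_n(r,\lambda)=N_e(n)$ from \refthm{cubnecklace} --- and obtains the closed form by M\"obius-inverting the explicit necklace count, the inversion being carried out in \refapd{cublyndon} with Dirichlet generating functions, which absorb the $\odd$-convolutions and yield $n L_e(n)=\delta_{n,1}+\sum_{cd=n}\mu(c)\,\odd(c)\,(3^d-1)/2$. You instead give $L_e(n)$ a direct dynamical meaning --- the number of $n$-cycles modulo the symmetry $\sigma\colon x_k\mapsto -x_k$, legitimate because the even basis polynomials and the multiplier $\prod_k(r-3x_k^2)$ cannot separate a cycle from its reflection (your degree-equals-count step carries the same caveat, and the same fix, as Remark 1 of \refthm{lyndon}) --- and you evaluate that count by Burnside's lemma, with the self-symmetric cycles (precisely the odd-cycles of \refsec{oddcycle}) counted through the twisted equation $f^e(x)=-x$ and an inversion over divisors with odd cofactor. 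Your assembly does close: the decomposition $3^e=1+\sum_{e'|e,\ e/e'\ \mathrm{odd}}2e'\,s_{2e'}$ is right (the nonzero roots of $f^e(x)=-x$ are exactly the points of self-symmetric $2e'$-cycles with $e/e'$ odd, plus the stray root $0$), and for $n=2e$ the final matching needs only the identity $\sum_{c|n}\mu(c)\,3^{n/c}=\sum_{c|e,\ c\ \mathrm{odd}}\mu(c)\bigl(3^{2e/c}-3^{e/c}\bigr)$, which holds because $\mu$ vanishes on multiples of $4$ and $\mu(2c)=-\mu(c)$ for odd $c$. What your route buys: a transparent interpretation of $L_e(n)$ and of the factor $\odd(c)$ (self-symmetric cycles exist only for odd cofactors), plus independence from \refthm{cubnecklace}, so that $N_e(n)=\sum_{d|n}L_e(d)$ becomes a nontrivial cross-check rather than an input. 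What the paper's route buys: economy --- the combinatorial work was already done in counting even ternary necklaces, and the remaining step is systematic generating-function algebra.
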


\refeq{cublyndon} follows from the inversion
  $L_e(n) = \sum_{d|n} \mu(n/d) \, N_e(n)$,
after some algebra, as shown in \refapd{cublyndon}.
$L_e(n)$ = 2, 2, 4, 10, 24, 60, 156, 410, \ldots, starting from $n = 1$.
This is also the number of the $n$-cycles \cite{hao}.
Note that, for $n>1$, half of the cycles have negative $r$,
and the $x_k$ are imaginary.
However, in a transformed map,
\[
z_{k+1} = r \, z_k \, (1 - {z_k}^2),
\]
which differs from \refeq{cubic} by $x_k = \sqrt{r} z_k$,
$z_k$ in the negative-$r$ cycles are real.

Following a similar proof to \refthm{degR},
we find the corresponding degrees in $r$
of the characteristic polynomial $A_n(r, \lambda)$
  and minimal polynomial $P_n(r, \lambda)$ of the $n$-cycles
are $n N_e(n)$ and $n L_e(n)$, respectively.

\subsection{\label{sec:oddcycle}Odd-cycles}

Because of the symmetry $f(-x) = -f(x)$,
the minimal polynomial $P_n(r, \lambda)$ 
is subject to factorization for an even $n$.
If $x_{(n/2)+1} = -x_1$,
  then $x_1, \ldots, x_{n/2}, -x_1, \ldots, -x_{n/2}$ 
  is an $n$-cycle, 
  for $x_{n+1} = - x_{(n/2)+1} = x_1$.
We call such a cycle an odd-cycle,
see \reffig{oddcycle} for examples.
Odd-cycles satisfy a polynomial 
  of lower degrees in $\lambda$,
  which causes the factorization.
Suppose $\Lambda^\odd(\vx) \equiv \prod_{k = 1}^{n/2} f'(x_k)$ in the odd-cycle
 satisfies $P_{n/2}^\odd(r, \lambda^\odd) = 0$ 
\big[where $\lambda^\odd$ is the value of $\Lambda^\odd(\vx)$,
and $\lambda^\odd = \pm\sqrt \lambda$\,\big],
then $P^\odd_{n/2}(r, \sqrt \lambda) P^\odd_{n/2}(r, -\sqrt \lambda)$ 
is a factor of $P_{n}(r, \lambda)$.
For example, by solving the $n = 2$ odd-cycle,
  we have $- x_1 = r x_1 - {x_1}^3$, or ${x_1}^2 = r + 1$.
Since $\lambda^\odd = r - 3 {x_1}^2$,
$P^\odd_1(r, \lambda^\odd) = \lambda^\odd + 2 r + 3$.
Now the factor for 2-cycles (see \reftab{cubpolygen})
  is
  $P_2(r,\lambda) = -\big[(2r+3)^2 - \lambda\big]\,(\lambda + 2r^2-9)$,
  whose first factor is indeed 
  $P^\odd_1(r, \sqrt \lambda) \, P^\odd_1(r, -\sqrt \lambda)
   = (\sqrt \lambda + 2 r + 3)(-\sqrt \lambda + 2 r + 3)$.
The example for $n=4$ is shown in \reftab{cubpolygen}.

\section{\label{sec:end}Summary and discussions}

We now summarize the algorithm for a one-dimensional polynomial map.
First, we list \refeqs{xcp}  with $\Lambda(\vx) = \prod_{k=1}^n f'(x_k)$.
This step populates elements of the matrix $\vct T(r)$,
where $r$ is the parameter of the map.
The determinant $A_n(r, \lambda) = \big|\lambda \, \vct I - \vct T(r)\big|$,
with $\lambda$ being $+1$ and $-1$, then gives the characteristic 
polynomial at onset and bifurcation points, respectively.
To filter out factors for the shorter $d$-cycles with $d|n$,
  we repeat the process for other divisors $d$ of $n$
  and then apply \req{primfac}.

When implemented on a computer,
it is often helpful to evaluate $A_n(r,\lambda)$ by Lagrange interpolation,
that is, we evaluate $A_n(r, \lambda)$ at a few different $r$,
e.g., $r = 0, \pm1, \pm2,\ldots$, then piece them together
to a polynomial.
The strategy also allows a trivial parallelization.

The algorithm (implemented as a Mathematica program)
was quite efficient.
For the logistic map, the bifurcation point for $n = 8$
took three seconds to compute on a desktop computer
(single core, Intel\textsuperscript{\textregistered} Dual-Core CPU 2.50GHz).
In comparison,
  the same problem took roughly 5.5 hours \cite{kk1}
  using Gr\"obner basis
  and 44 minutes in a later study \cite{lewis}.
To be fair, using the latest Magma, computing the Gr\"obner basis
  took 81 and 14 minutes, on the same machine
  for \refeq{logmap} and \refeq{logmaps}, respectively;
even so, our approach still had a 200-fold speed-up.

The exact polynomials of these maps are generally too large to print on paper,
  e.g., the polynomial for the logistic map with $n = 13$
    takes roughly seven megabytes to write down.
We therefore save the polynomials and programs of the three maps on the web site
http://logperiod.appspot.com.


\section*{Acknowledgements}

I thank Drs. T. Gilbert and Y. Mei for helpful communications.
Computing time on the Shared University Grid at Rice, 
funded by NSF under Grant EIA-0216467, is gratefully acknowledged.

\appendix

\section{\label{apd:per4}Simple derivation of 4-cycles}

The polynomials for the 4-cycles permits a short derivation.
We first list the explicit equations:
\begin{subequations}
\label{eq:x4}
\begin{align}
  x_2 &= R - x_1^2, \\
  x_3 &= R - x_2^2, \\
  x_4 &= R - x_3^2, \\
  x_1 &= R - x_4^2,
\end{align}
\end{subequations}
$\big[\mathrm{\refeqsub{x4}{a}} - \mathrm{\refeqsub{x4}{c}}\big]
\times
\big[\mathrm{\refeqsub{x4}{b}} - \mathrm{\refeqsub{x4}{d}}\big]$
yields $1 + (x_1 + x_3) (x_2 + x_4) = 0$, 
since $x_1 \ne x_3, x_2 \ne x_4$.
Hence, with
$y_1 \equiv x_1 + x_3$, $y_2 \equiv x_2 + x_4$,
$z \equiv y_1 + y_2$,
we have
\begin{subequations}
\begin{align}
y_1 y_2       &= -1 \\
y_1^2 + y_2^2 & = (y_1 + y_2)^2 - 2 y_1 y_2 = z^2 + 2 \\
y_1^3 + y_2^3 & = (y_1 + y_2)^3 - 3 y_1 y_2 (y_1 + y_2) = z^3 + 3 z.
\end{align}
\label{eq:ypow4}
\end{subequations}
Multiplying \refeqsub{x4}{a} by $x_1$ or $x_3$,
then summing over cyclic versions yields
\begin{subequations}
\begin{align}
y_1 y_2 &= R z - \big[(x_1^3 + x_3^3) + (x_2^3 + x_4^3)\big],\\
y_1 y_2 &= R z - \big[x_1 x_3 (x_1 + x_3) + x_2 x_4 (x_2 + x_4)\big].
\end{align}
\label{eq:p4q}
\end{subequations}
From 
$\mathrm{\refeqsub{p4q}{a}} + 3 \times \mathrm{\refeqsub{p4q}{b}}$,
we have
$4 \, y_1 y_2 = 4 R z - (y_1^3 + y_2^3)$,
and by \refeqs{ypow4},
\begin{equation}
  z^3 - (4 R - 3) z - 4 = 0.
  \label{eq:xr4s}
\end{equation}
Since $2 x_1 x_3 = y_1^2 - (x_1^2 + x_3^2) = y_1^2 - 2 R + y_2$,
and $2 x_2 x_4 = y_2^2 - 2 R + y_1$,
%
\begin{equation}
  X \equiv x_1 x_2 x_3 x_4 = \frac{1}{2} R z(1 - z) + (R^2 - R + 1),
\label{eq:der4}
\end{equation}
where we have used \refeqs{ypow4} and \req{xr4s} to simplify the result.
%
Dividing the polynomial in \refeq{xr4s} by that in \refeq{der4}
yields $z = (R^2-3R-X+1)/(R^2-R+X-1)$,
and plugging it back to \refeq{der4} gives
   $R^6 -3 R^5
  + (3 + X) (R^4 - R^3)
  + (1 - X) (2 + X) R^2
  + (1 - X)^3 = 0$,
which is the same as the first factor of \refeq{xr4}
  with $X = \lambda/16$.

\section{\label{apd:cublyndon}Proof of \refthm{cublyndon}}

Here we prove \refthm{cublyndon} \big[or \refeq{cublyndon}\big]
  for the cubic map.
Similar to the logistic map case \refeq{necklacelyndon},
  we have, for the cubic map,
\[
  N_e(n) = \sum_{d|n} L_e(d).
\]
Thus, we only need to inverse this equation to obtain $L_e(n)$.
But owing to the complexity of \refeq{cubnecklace},
  we need the Dirichlet generating function
  to simplify the result.

For a series $\alpha(n)$,
the Dirichlet generating function is defined as
\[
    G_\alpha(s) \equiv \sum_{n=1}^\infty \alpha(n) \, n^{-s}.
\]
In \reftab{genfunc}, we list the generating functions of 
some common series,
and define a few new ones for $N_e(n)$ and $L_e(n)$, etc..

\begin{table}[h]\footnotesize
  \caption{Dirichlet generating functions for ternary necklaces.}
\begin{center}
\begin{tabularx}{\textwidth}{
  >{\hsize=0.6\hsize\raggedleft\arraybackslash}X  
  >{\hsize=1.4\hsize\centering\arraybackslash}X  
  >{\hsize=0.6\hsize\raggedleft\arraybackslash}X  
  >{\hsize=1.4\hsize\centering\arraybackslash}X 
}
\hline
$\alpha(n)$   &   $G_\alpha(s) = \sum_{n} \alpha(n)/n^{s}$  &
$\alpha(n)$   &   $G_\alpha(s) = \sum_{n} \alpha(n)/n^{s}$ 
\\
\hline
1             & $\zeta(s)$ $^\dagger$ &
\multicolumn{1}{|r}
{$N_e(n)$}      & $G_N(s)$ $^\ddagger$
\\
$\delta_{n,1}$         & $1$ $^\dagger$ &
\multicolumn{1}{|r}
{$n N_e(n)$}  & $G_N(s-1)$ $^\ddagger$
\\
$\mu(n)$    & $\zeta(s)^{-1}$ $^\dagger$ &
\multicolumn{1}{|r}
{$L_e(n)$}    & $G_L(s)$ $^\ddagger$
\\
$\phi(n)$   & $\zeta(s-1)/\zeta(s)$ $^\dagger$ &
\multicolumn{1}{|r}
{$nL_e(n)$}   & $G_L(s-1)$ $^\ddagger$
\\
$\odd(n)$   & $\zeta(s) (1-2^{-s})$ &
\multicolumn{1}{|r}
{$3^n$}       & $t(s)$ $^\ddagger$
\\
\cline{3-4}
$\mu(n) \, \odd(n)$
  & $\Big[ \zeta(s) (1-2^{-s}) \Big]^{-1}$ &
$\phi(n) \, \odd(n)$
  & $\frac{\zeta(s-1)}{\zeta(s)} \frac{1-2^{-s+1}}{1-2^{-s}}$ 
\\
\hline
\multicolumn{4}{p{\linewidth}}{
$^\dagger$
$\zeta(s) = \sum_n n^{-s}$ is the zeta function.
See ref. \cite{hardy} for proofs. 
}\\
\multicolumn{4}{p{\linewidth}}{
$^\ddagger$ 
The sum is truncated at a large $M$ to avoid divergence.
}\\
\hline
\end{tabularx}
\end{center}
\label{tab:genfunc}
\end{table}

The generating function has an important property:
$G_\gamma(s) = G_\alpha(s) G_\beta(s)$, 
if and only if $\gamma(n) = \sum_{d|n} \alpha(n/d)\, \beta(d)$ \cite{hardy}.
Thus,
  the terms of the sum $\sum_{d|n} \alpha(n/d)\, \beta(d)$ 
  of two sequences $\alpha$ and $\beta$
  can be readily found from expanding the generating function.

Another fact is if $G_\alpha(s)$ is the generating function of $\alpha(n)$,
then $G_\alpha(s-1)$ is that of $n \alpha(n)$,
for $\alpha(n)/n^{s-1} = \big[ n \alpha(n) \big] / n^s$.
Thus, the generating function of $n$ is $\zeta(s-1)$,
and
that of $n N_e(n)$ is $G_N(s-1)$ 
\big[$\zeta(s)$ is the generating function of 1,
  and $G_N(s)$ is that of $N_e(n)$,
  see \reftab{genfunc}\big].

We now compute the generating function of $\mu(n)\odd(n)$.
First, recall the generating function $G_\mu(s)$ of $\mu(n)$ is
\[
  G_\mu(s) = \sum_{n=1}^{\infty} \frac{\mu(n)}{n^s} 
  = \prod_p \left( 1 - \frac{1}{p^s} \right),
\]
where $p$ goes through every prime.
The follows directly from expanding the product 
  and the definition of $\mu(n)$,
which is $-1$ to the power of the number of distinct prime factors.
The same reasoning applies to 
$G_{\mu, \odd}(s) = \sum_{n \text{ odd}} \mu(n)/n^s$
with the only difference being that all multiples $n$ of 2
are absent. So
\[
  G_{\mu, \odd}(s) 
  = \sum_{n=1}^{\infty} \frac{\mu(n) \, \odd(n)}{n^s}
  = \sum_{n \text{ odd}} \frac{\mu(n)}{n^s}
  = \prod_{p \ge 3} \left( 1 - \frac{1}{p^s} \right).
\]
Comparing the two formulas yields
\[
  G_{\mu, \odd}(s) 
  = G_{\mu}(s) 
  \left(1- \frac{1}{2^s}\right)^{-1} 
  = 
   \left[ \zeta(s) \left(1- \frac{1}{2^s}\right) \right]^{-1}. 
\]

Similarly,
we can compute the generating function of $\odd(n)$ as
\[
  G_\odd(s) 
  \equiv \sum_{n=1}^{\infty} \frac{\odd(n)}{n^s}
  = \sum_{n \text{ odd}} \frac{1}{n^s}
  = \zeta(s) \left(1 - \frac{1}{2^s} \right).
\]
This can also be derived by taking the generating function of 
both sides of the identity:
$\sum_{d|n} \mu(d) \odd(d) \, \odd(n/d) = \delta_{n, 1}$
\big[which is a modification of
$\sum_{d|n} \mu(d) = \delta_{n, 1}$\big].
It follows that the generating function of $n \, \odd(n)$ is $G_\odd(s-1)$.

The generating function $G_{\phi,\odd}(s)$
of $\phi(n) \odd(n)$ can be computed by taking the generating function
of both sides of the identity
\[
  n \, \odd(n)  = \sum_{d|n} \phi(d) \, \odd(d) \; \odd(n/d),
\]
i.e., if $n$ is even, then both sides are 0; 
if odd, then $n = \sum_{d|n} \phi(d)$.
So
\[
  G_{\phi,\odd}(s) = \frac{G_\odd(s-1)}{G_\odd(s)}
  = \frac{\zeta(s-1)}{\zeta(s)} \frac{1 - 2^{-s + 1}}{1 - 2^{-s}}.
\]

We can now compute the generating function $G_N(s)$ of $N_e(s)$.
By multiplying $n$ to both sides of \refeq{cubnecklace},
and taking the generating function, 
we find that
\begin{align*}
  G_N(s - 1)
    = G_\phi(s) t(s)- G_{\phi, \odd}(s) \frac{t(s)-\zeta(s)}{2}
    = 
      \frac{\zeta(s-1) \Big[ t(s) + \zeta(s) \big( 1  - 2^{-s+1} \big) \Big]  }
           {2 \, \zeta(s) (1 - 2^{-s}) }
\end{align*}
where the left side $G_N(s-1)$ is
  the generating function of $N_e(n) \, n$,
and formulas in \reftab{genfunc} have been used.

Finally, we take the generating function of both sides of
 $N_e(n) = \sum_{d|n} L_e(d)$:
\[  
  G_N(s) = \zeta(s) \, G_L(s),
\]
and
\begin{align*}
  G_L(s-1) = \frac{G_N(s - 1)}{\zeta(s-1)}
       = 1
        + \frac{ t(s) - \zeta(s) }
          {2 \, \zeta(s) \big(1 - 2^{-s}\big)}
       = 1 + \frac{t(s) - \zeta(s)}{2} G_{\mu, \odd}(s). 
\end{align*}
Comparing the coefficients of the $n$th term ($n \ll M$),
  we find 
\[ 
  n L_e(n) = \delta_{n,1} + \sum_{cd = n} \mu(c) \, \odd(c) \frac{3^d-1}{2},
\]
which is \refeq{cublyndon}
\big[also note $\delta_{n,1} = \sum_{c|n}\mu(c)$\big].

\end{document}